\newtheorem{theorem}{Theorem}
\newtheorem{proposition}{Proposition}
\title{A Note on Hardness Frameworks and Computational Complexity of Xiangqi and Janggi}
\author{Zhujun Zhang \thanks{E-mail: zhangzhujun1988@163.com} \\Water Bureau of Fengxian District, Shanghai}
\date{March 30, 2019} 
\providecommand{\keywords}[1]{\textbf{\textit{Index terms---}} #1}
\begin{document}
\maketitle

\begin{abstract}
We review NP-hardness framework and PSPACE-hardness framework for a type of 2D platform games. 
We introduce a EXPTIME-hardness framework by defining some new gadgets. 
We use these hardness frameworks to analyse computational complexity of Xiangqi (Chinese Chess) and Janggi (Korean Chess). 
We construct all gadgets of the hardness frameworks in Xiangqi and Janggi.
In conclusion, we prove that Xiangqi and Janggi are both EXPTIME-complete.

\end{abstract}

\keywords{hardness framework, Xiangqi, Chinese chess, Janggi, Korean chess, computational complexity, EXPTIME-complete}

\section{Introduction}

In last years, researchers introduce some useful hardness framework for proving hardness of 2D platform games.
With these frameworks in hand, we can analyse computational complexity of games conveniently.
To prove hardness of games, we just need to construct all gadgets of hardness frameworks in games.
Recently, complexity of many 2D platform games have been determined by using these frameworks.

Xiangqi (Chinese Chess) is a strategy board game for two players, and it is recognized as a classic Chess variant.
Xiangqi is one of the most popular board games in China. 
Janggi (Korean Chess) is the variant of Chess played in Korea, and it derived from Xiangqi.
Janggi is highly similar to Xiangqi, but it still differs from Xiangqi in various ways.
There is little research on complexity of Xiangqi and Janggi.

In this note, we review previous work in Section 2;
in Section 3, we describe three hardness frameworks;
we discuss complexity of Janggi and Xiangqi in Section 4 and Section 5 respectively; 
we summarize this note in Section 6.

\newpage

\section{Previous Work}

In last decades, many classic two-player board games were proved to be computationally hard.
In 1980, Reisch \cite{gomoku} proved Gomoku (Gobang) to be PSPACE-complete.
In 1984, Robson \cite{checkers} proved Checkers to be EXPTIME-complete.
In 1994, Iwata and Kasai \cite{othello} proved Othello (Reversi) to be PSPACE-complete.

Computational complexity of Go and subproblems of the game were analysed in depth.
In 1980, Lichtenstein and Sipser \cite{gopspace} proved Go with the superko rule to be PSPACE-hard. 
In 1983, Robson \cite{goexptime} proved  Go with the basic ko rule to be EXPTIME-complete.
In 1998, Cr{\^a}{\c{s}}maru \cite{tsumeGo} proved Tsume-Go to be NP-complete.
In 2000, Cr{\^a}{\c{s}}maru and Tromp \cite{goladders} proved Ladders to be PSPACE-complete.
In 2002, Wolfe \cite{goendgames} proved Go endgames to be PSPACE-hard.

People also researched computational complexity of Chess and variants of Chess. 
In 1981, Fraenkel and Lichtenstein \cite{chess} proved Chess to be EXPTIME-complete.
In 1987, Adachi et al. \cite{shogi} proved Shogi to be EXPTIME-complete.
In 2001, Yokota et al. \cite{tsumeshogi} proved Tsume-Shogi to be EXPTIME-complete.
In 2005, Yato et al. \cite{yozumetsumeshogi} proved finding another solution of Tsume-Shogi to be EXPTIME-complete.

Demaine and Hearn review results
about the complexity of many classic games in their survey paper \cite{playinggameswithalgorithms}.
Other results about complexity of well-known games could be found in Wiki page ``Game complexity'' \cite{gamecomplexity}.

More recently, researchers focus on complexity of some interesting 2D platform video games and frameworks which could be used to prove hardness of games.
In 2010, Fori\v{s}ek \cite{2dplatformgames} introduced a basic NP-hardness framework for 2D platform games.
In 2012, Aloupis, Demaine and Guo \cite{nintendogamesnphard} introduced a elegant NP-hardness framework, and they used the framework to analyse complexity of several classic Nintendo games.
In 2014, Viglietta \cite{gamehardjob} established some general schemes relating the computational complexity of 2D platform games. 
Viglietta applied such schemes to several games, for instance, he proved Lemmings to be PSPACE-complete in \cite{lemmings}.
In 2015, Aloupis et al. \cite{nintendogameshard} introduced a new PSPACE-hardness framework for games.
In this framework, usually, only open-close door and crossover gadgets are nontrivial.
Using such framework, Demaine, Viglietta and Williams \cite{supermariohard} proved Super Mario Bros. to be PSPACE-complete in 2016.
Other results about complexity of games and puzzles could be found in Hearn and Demaine's summary book \cite{gamepuzzlecomputation}.

There is little research on complexity of Xiangqi and Janggi.
Park \cite{statecomplexityxj} discussed space-state complexity of Xiangqi and Janggi.
Khan \cite{analysischinesechess} used combinatorial game theory (CGT) to analyse complexity of Xiangqi.
Gao and Xu \cite{gaophd} \cite{chinesechess} argued that Xiangqi is EXPTIME-complete, however, there is a mistake in the Boolean Controller of their reduction.
Our method in this note is fully inspired by their work.

\newpage

\section{Hardness Frameworks}

We assume that each player control a moveable avatar in a game.
The general decision problem is to decide whether the avatar gets from a given start point to a given finish point (or before its opponent does).
In this section, we describe three hardness frameworks for proving hardness of 2D platform games.  
With these frameworks in hand, we can easily establish reduction by constructing necessary gadgets in games.

\subsection{NP-hardness Framework}

We describe the first framework for proving NP-hardness of games. 
The framework is from \cite{nintendogamesnphard} and \cite{nintendogameshard}. 
The framework reduces from the classic NP-complete problem 3-SAT.
To prove NP-hardness, we need following gadgets:

\textbf{Start gadget.} 
The start gadget contains the start point and one exit for the avatar. 

\textbf{Finish gadget.} 
The finish gadget contains one entrance and the finish point for the avatar. 
When the avatar enters the finish gadget, the player wins immediately.

\textbf{Turn gadget.} 
The turn gadget contains one entrance and one exit for the avatar.
The avatar can change moving direction by traversing a turn gadget.

\textbf{Switch and merge gadgets.} 
The switch gadget contains one entrance and more than one exits for the avatar.
When the avatar enters a switch gadget, it can choose one of exits to leave the gadget.
On the other hand, the merge gadget contains one exit and more than one entrances for avatar.
The avatar enters a merge gadget from the entrances, then it leaves the gadget through the only exit.
In many games, merge gadgets are identical to switch gadgets .

\textbf{One-way gadget.} 
The one-way gadget contains one entrance and one exit for the avatar.
The avatar can only traverse this gadget in one direction, and traversing in reverse direction is forbidden.
We use lines with arrows to represent one-way gadgets in figures.

\textbf{Crossover gadget.} 
The crossover gadget contains two paths that cross each other.
The avatar can traverse each path respectively, and there is no leakage between two paths.

\textbf{Door gadget.} 
A door gadget is an object with two states, open and closed.
The door gadget contains two paths for the avatar, and there is no leakage between two paths.
One path is an open path, and the other is a traverse path.
The avatar can traverse the traverse path if and only if the door gadget is in the open state.
When the avatar traverses the open path, it is allowed to open the door gadget.

\begin{figure}[t]
	\centering  
	\includegraphics[width=0.7 \linewidth]{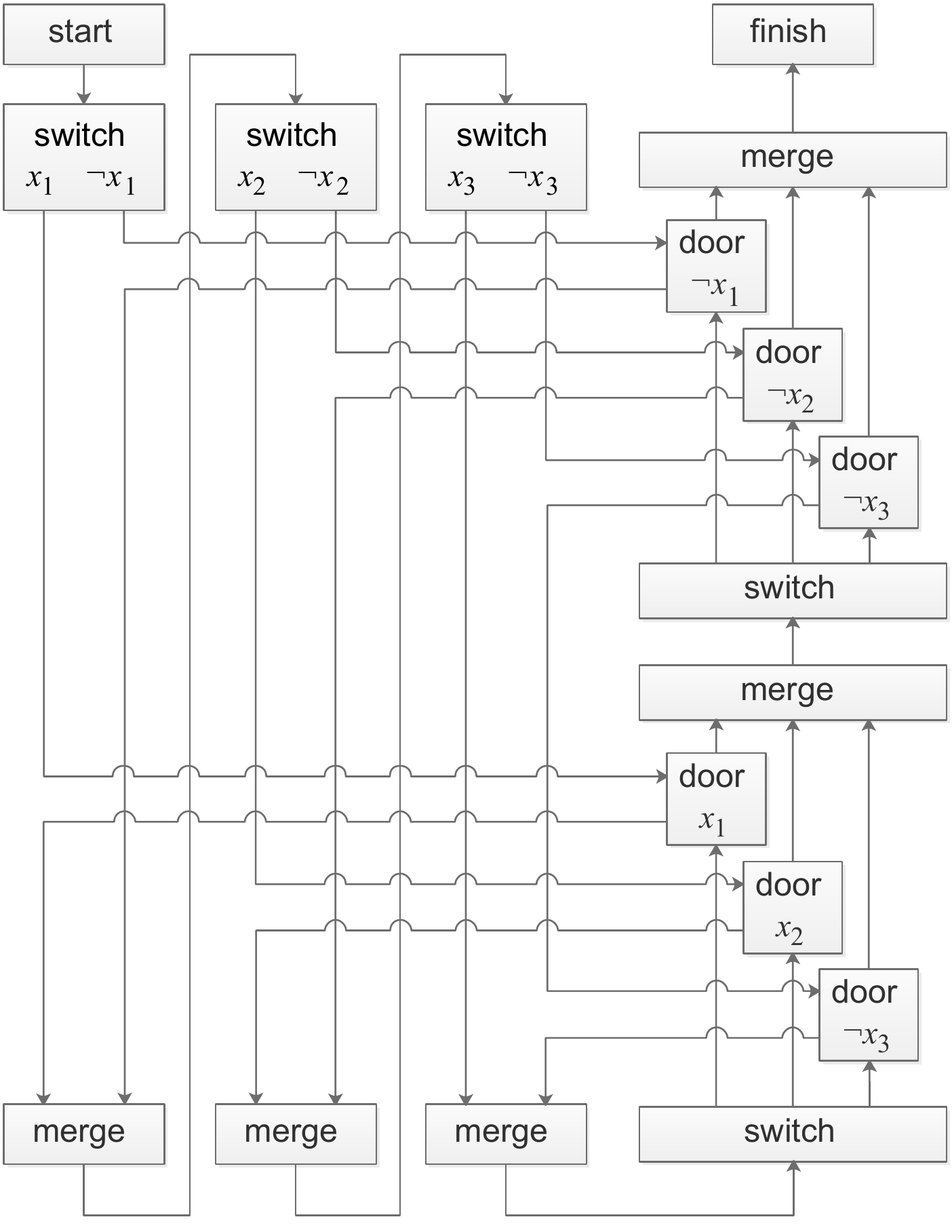}  
	\caption{NP-hardness framework.}  
	\label{NPHframework}   
\end{figure}

We use a example to describe the hardness framework.
Figure \ref{NPHframework} illustrates the NP-hardness framework corresponding to formula $\left( \neg x_1 \vee \neg x_2 \vee \neg x_3 \right) \wedge \left( x_1 \vee x_2 \vee \neg x_3 \right) $.
At the beginning, all door gadgets are in the closed state.
The player-controlled avatar starts from the start gadget.
Then it enters the first switch gadget corresponding to the variable $x_1$ in the formula.
The avatar may choose one exit to leave the switch gadget, which is corresponding to assignment of $x_1$.
After the avatar leaves the switch gadget, the one-way gadget prevents it from moving back to the switch gadget.
Then the avatar can open the door gadgets corresponding to the literals in the formula.
After opening the door gadgets, the avatar enters a merge gadget, which indicates variable $x_1$ has been assigned. 
Then the avatar enters the next switch gadget corresponding to variable  $x_2$.
This process continues until all variables have been assigned.
The avatar enters a switch gadget with three exits, then it has to choose one exit connecting to an open door gadget.
Traversing a open door gadget indicates that the corresponding clause in the formula is true.
The situation of another clauses is similar.
Finally, the avatar arrives the finish gadget.

It is easy to verify that the 3-CNF boolean formula can be made true if and only if the avatar can arrive the finish gadget in a game.
Moreover, number of gadgets in the framework is polynomial, so that the reduction could be established in polynomial time.

\textbf{Remarks.}
Since every path of NP-hardness Framework will be traversed by the avatar for at most one time. 
The reduction still works even if all gadgets are single-use gadgets.
Usually, it is easy to construct the start, finish, turn, switch and merge gadgets in games, while constructing the crossover, one-way and door gadgets is more difficult.

\subsection{PSPACE-hardness Framework}

We describe the second framework for proving PSPACE-hardness of games. 
The framework is from \cite{nintendogameshard} and \cite{supermariohard}. 
The framework reduces from the PSPACE-complete problem True Quantified Boolean Formula (TQBF).
We need the start, finish, turn, switch, merge, one-way, crossover and open-close door gadgets to establish PSPACE-hardness framework.
Most gadgets of the framework are described in last section.
Thus we just describe the open-close door gadget and a auxiliary gadget here.

\textbf{Open-close door gadget.} 
Figure \ref{openclosedoor} illustrates an open-close door gadget of PSPACE-hardness framework. 
An open-close door gadget is an object with two states, open and closed.
The open-close door gadget contains three paths, open path, traverse path and close path.
The avatar can traverse the traverse path if and only if the open-close gadget is in the open state.
When the avatar traverses the open path, it is allowed to open the gadget; when the avatar traverses the close path, it has to close the gadget.
Moreover, there is no leakage between three paths.

\begin{figure}[htbp]
	\centering  
	\includegraphics[width=0.4 \linewidth]{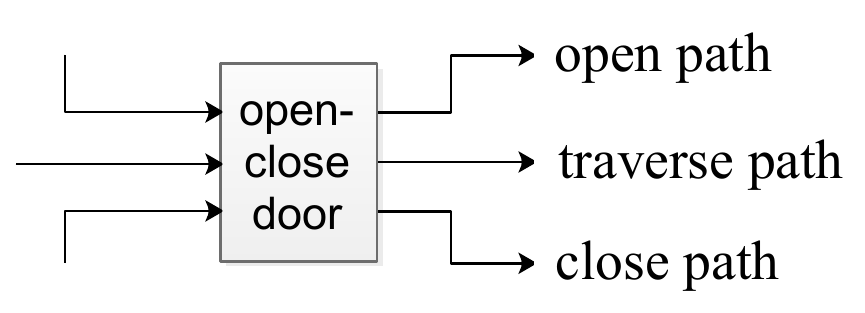}  
	\caption{Open-close door gadget.}  
	\label{openclosedoor}   
\end{figure}

\begin{figure}[htbp]
	\centering  
	\includegraphics[width=0.35 \linewidth]{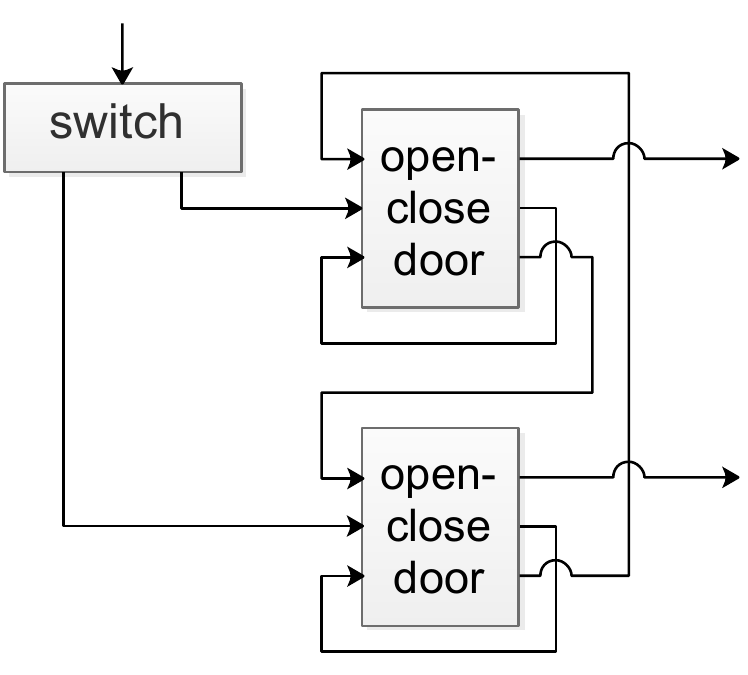}  
	\caption{Alternation gadget.}  
	\label{alternation}   
\end{figure}

\begin{figure}[htbp]
	\centering  
	\includegraphics[width=1.0 \linewidth]{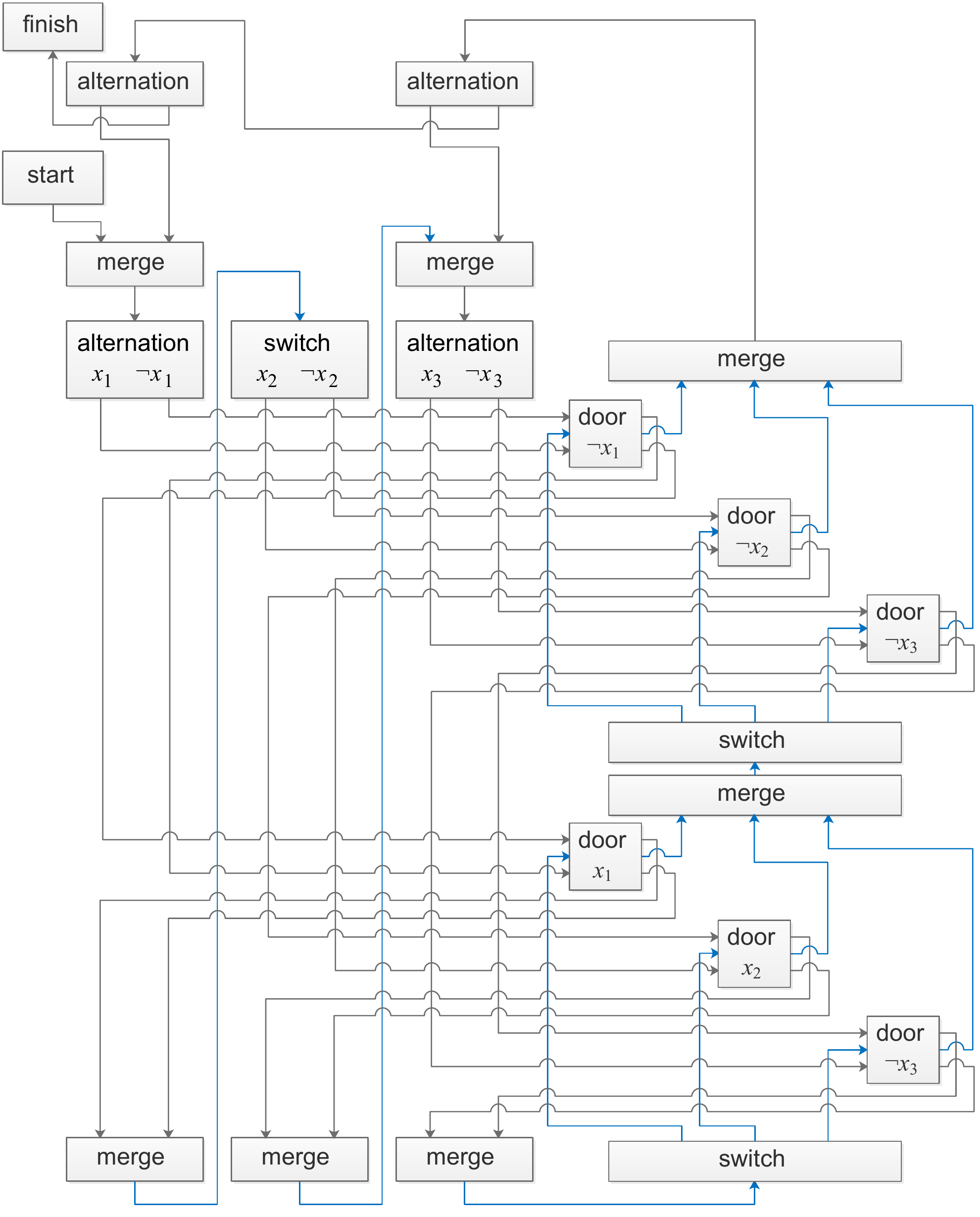}  
	\caption{PSPACE-hardness framework.}  
	\label{PSPACEHframework}   
\end{figure}

\textbf{Alternation gadget.} 
The alternation gadget contains one entrance and two exits for the avatar.
The avatar leaves the alternation gadget through these two exits alternately.
When the avatar enters the gadget for the first time, it can only leave the gadget through one of the exits; when the avatar enters the gadget for the second time, it can only leave the gadget through another exit, and so forth.
The alternation gadget could be constructed by other gadgets.
Figure \ref{alternation} illustrates an alternation gadget which is constructed by the turn, switch, crossover and open-close door gadgets.
Every time the avatar traverse the alternation gadget, it has to close a open-close door gadget, and then it opens another open-close door gadget.

Again, we use a example to describe the hardness framework. 
Figure \ref{PSPACEHframework} illustrates the framework corresponding to formula $ \forall x_1 \exists x_2 \forall x_3 \left(  \left( \neg x_1 \vee \neg x_2 \vee \neg x_3 \right) \wedge \left( x_1 \vee x_2 \vee \neg x_3 \right) \right) $.
The player-controlled avatar starts from the start gadget.
Then it enters the alternation gadget corresponding to the universal variable $x_1$ in the formula.
As it is the first time that avatar enters the gadget, the avatar has to leave gadget through the left exit, which indicates that $x_1$ is assigned true.
Then the avatar opens the open-close door gadgets corresponding to literal $x_1$, on the other hand, it closes the open-close door gadgets corresponding to literal $ \neg x_1$.
After operating door gadgets, the avatar enters a merge gadget and a switch gadget corresponding to the existential variable $x_2$.
This process continues until all variables has been assigned.
Then the avatar enters the check progress.
For each clause in the formula, the avatar has to choose one of three paths (coloured in blue) to traverse.
After checking all clauses, the avatar enters a alternation gadget.
As it is the first time that avatar enters this gadget, the avatar has to leave gadget through the left exit, which forces the avatar to enters the alternation gadget corresponding to $x_3$ again.
Since it is the second time that the avatar entering the gadget, the avatar has to leave gadget through the right exit, which indicates that $x_3$ is assigned false.
Then the avatar enters the check progress again.
This process simulates the universal variable in the formula.
Finally, the avatar arrives the finish gadget if all possible assignments have been tested.

It is also easy to verify that the formula is quantified if and only if the avatar can arrive the finish gadget.
Moreover, number of gadgets in the framework is polynomial, so that the reduction could be established in polynomial time.

\textbf{Remarks.}
The gadgets of PSPACE-hardness framework will be traversed for exponential times potentially, thus it is insufficient to prove PSPACE-hardness by constructing single-use gadgets.
Sometimes, an open-close door gadget may contain additional paths connecting to the finish gadget, and we will see this type of gadgets in Section 4 and 5.

\subsection{EXPTIME-hardness Framework}

Using similar technique, we introduce a EXPTIME-hardness framework for 2D platform games here.
The framework reduces from EXPTIME-complete formula game $G_2$ described in \cite{difficultgames}.

$G_2$ is a two-player formula game where players move alternately.
A position of $G_2$ is a 4-tuple $(\tau, \text{I-WIN} (X,Y), \text{II-WIN} (X,Y), \alpha )$ where $\tau \in \{1,2\}$, $X = \{x_1, x_2, ...\}$, $Y = \{y_1, y_2, ...\}$, I-WIN and II-WIN are formulas in 12DNF, and $\alpha$ is an $(X \cup Y)$-assignment.
Player I(II) moves by changing the value assigned to at most one variable in $X(Y)$; either player may pass since changing no variable amounts to a ``pass''.
Player I(II) wins if the formula I-WIN(II-WIN) is true after some move of player I(II).
More precisely, player I can move from $(1, \text{I-WIN} (X,Y), \text{II-WIN} (X,Y), \alpha )$  to $(2, \text{I-WIN} (X,Y), \text{II-WIN} (X,Y), \alpha ')$ in one move if and only if $\alpha '$ differs from $\alpha$ in the assignment given to at most one variable in $X$ and II-WIN is false under the assignment $\alpha$; the move of player II are defined symmetrically.
Deciding whether Player I has a forced win in $G_2$ is EXPTIME-complete.

To simulate $G_2$, we assume that there are two players in a 2D platform game, and we use Red and Black to represent two players.
Red and Black control the red avatar and the black avatar respectively, and they move alternately.
To establish EXPTIME-hardness framework, we need the start, finish, turn, switch, merge, one-way gadgets for Red and Black respectively, and we need three types of crossover gadgets and six types of door gadgets.
The start, finish, turn, switch, merge, one-way gadgets for Red are identical to the gadgets described in Section 3.1, and the gadgets for Black could be constructed symmetrically in many games.
The crossover gadget for Red and Red is identical to the gadget described in Section 3.1, and crossover gadget for Black and Black could be constructed symmetrically.
The crossover gadget for Red and Black contains one path for Red and one path for Black, the red avatar and the black avatar can respectively traverse the gadget without obstructing each other.
Six door gadgets are RRR door, BBB door, RBR door, BRB door, BBR door and RRB door gadgets.
We use letters ``R'' and ``B'' to represent paths for red avatar and black avatar respectively.
The RRR door gadget is identical to the open-close door gadget described in last section.
Here, we just describe the RBR door and BBR door gadgets, since the BBB door, BRB door and RRB door gadgets could be defined symmetrically.

\textbf{RBR door gadget.} 
A RBR door gadget is an object with two states, open and closed.
The RBR door gadget contains three paths, open path, traverse path and close path.
The black avatar can traverse the traverse path if and only if the RBR gadget is in the open state.
When the red avatar traverses the open path, it has to open the gadget.
When the avatar traverses the close path, it is allowed to close the gadget.
Moreover, there is no leakage between three paths.

\begin{figure}[htbp]
	\centering  
	\includegraphics[width=0.4 \linewidth]{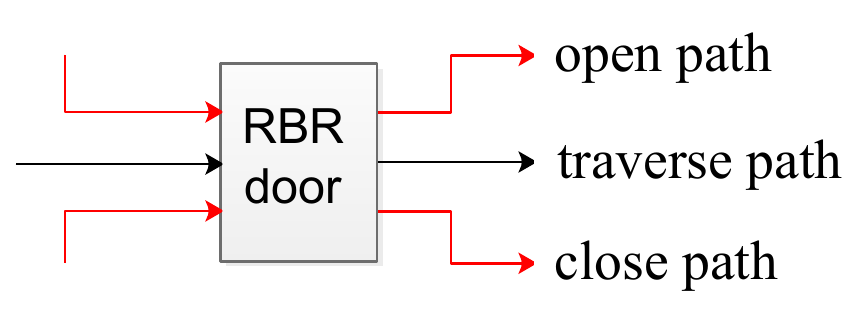}  
	\caption{RBR door gadget.}  
	\label{RBRdoor}   
\end{figure}

\begin{figure}[htbp]
	\centering  
	\includegraphics[width=0.4 \linewidth]{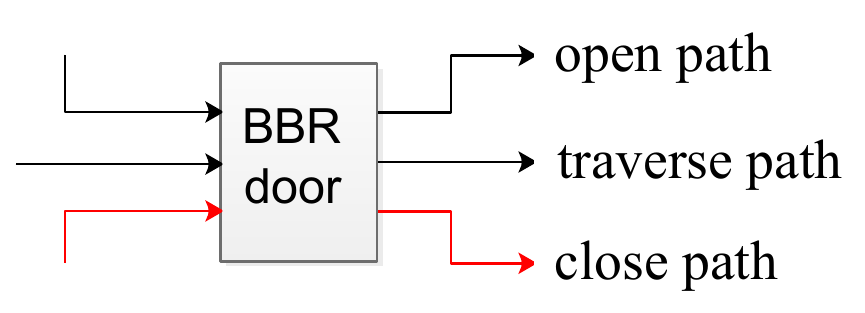}  
	\caption{BBR door gadget.}  
	\label{BBRdoor}   
\end{figure}

\textbf{BBR door gadget.} 
A BBR door gadget is an object with two states, open and closed.
The BBR door gadget contains three paths, open path, traverse path and close path.
The black avatar can traverse the traverse path if and only if the BBR gadget is in the open state.
When the black avatar traverses the open path, it is allowed to open the gadget.
When the red avatar traverses the close path, it is allowed to close the gadget.
Moreover, there is no leakage between three paths.

\begin{figure}[htbp]
	\centering  
	\includegraphics[width=1.0 \linewidth]{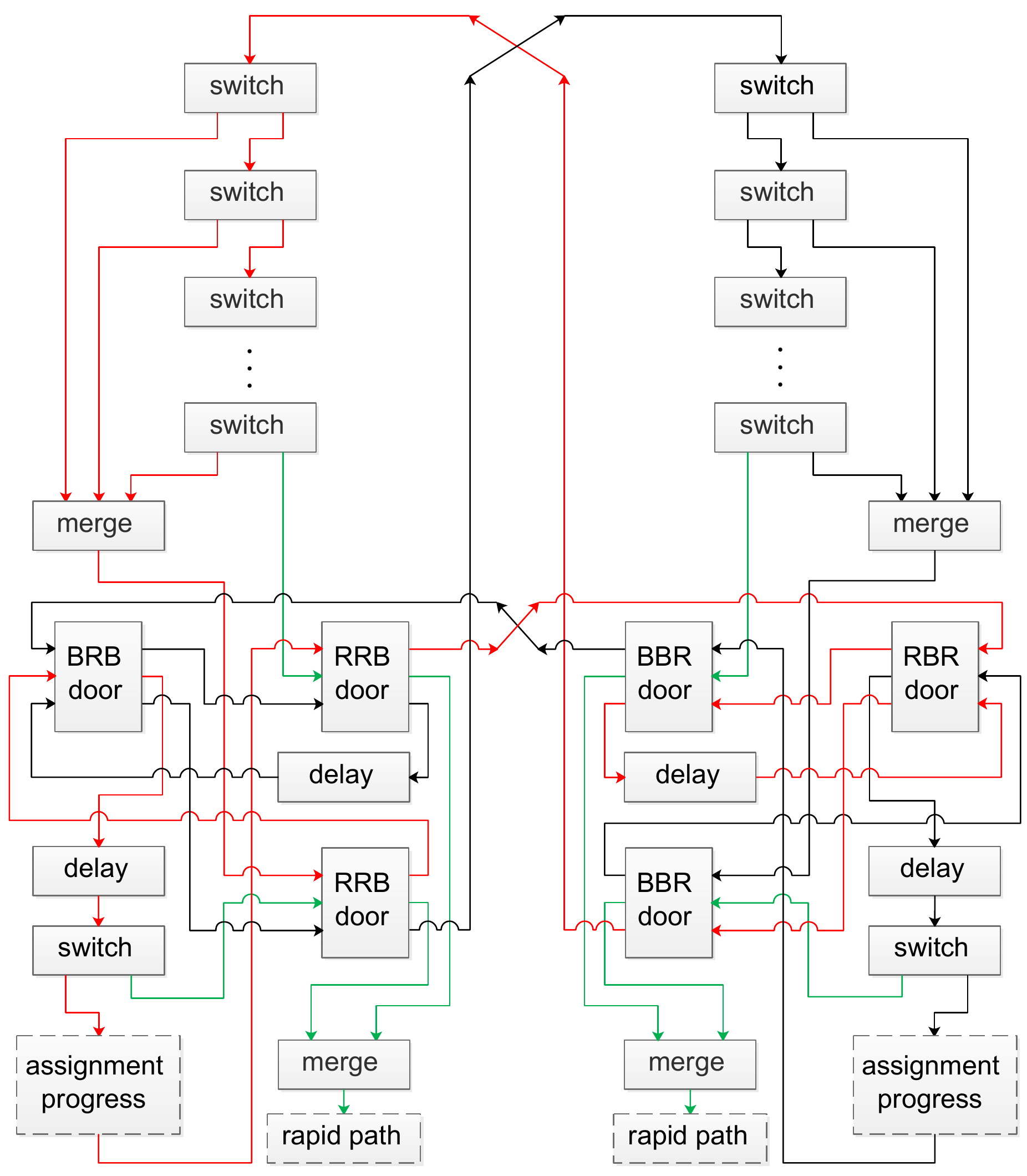}  
	\caption{Synchronization gadget.}  
	\label{synchronization}   
\end{figure}

\begin{figure}[htbp]
	\centering  
	\includegraphics[width=1.0 \linewidth]{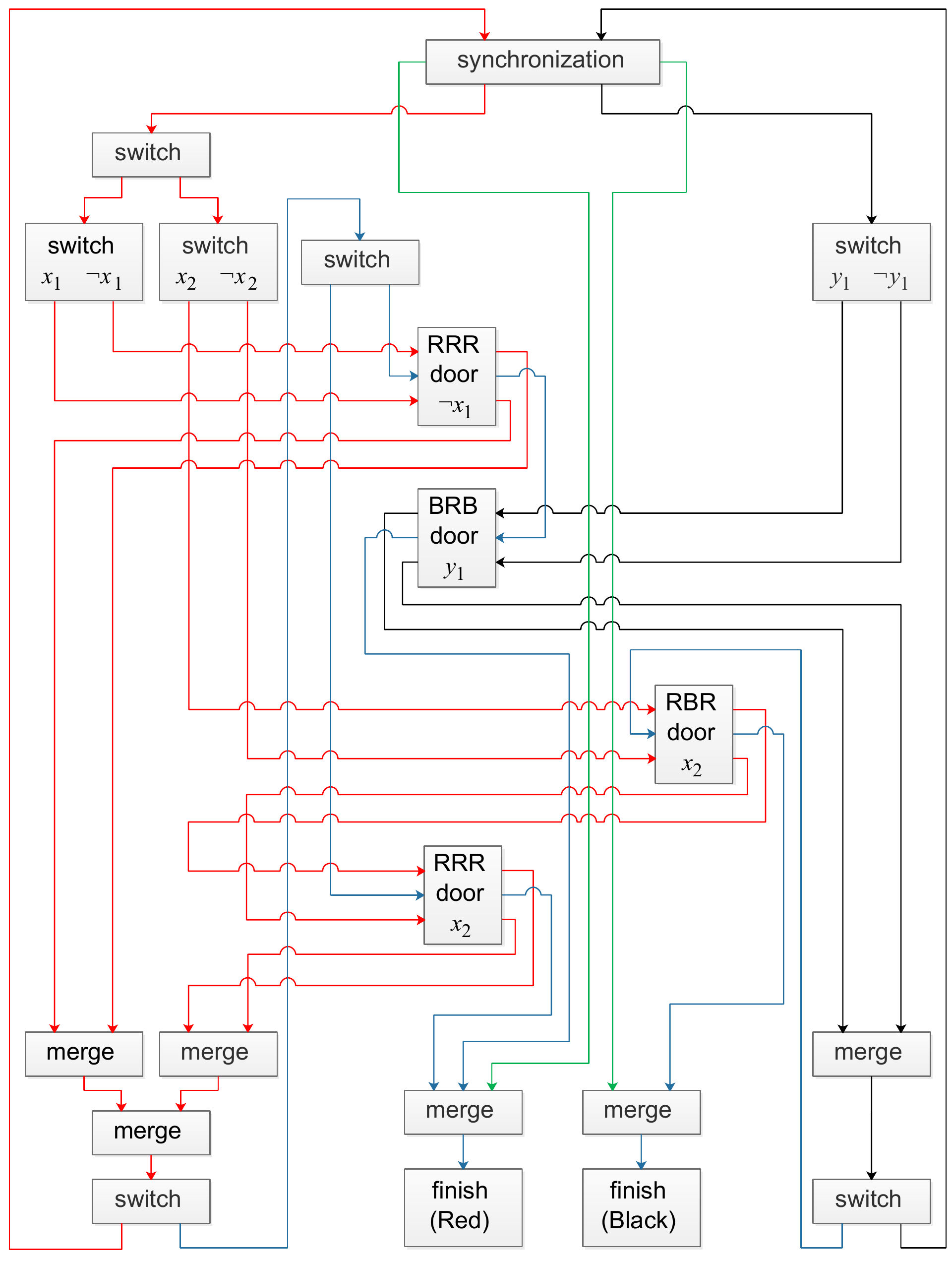}  
	\caption{EXPTIME-hardness framework.}  
	\label{EXPTIMEHframework}   
\end{figure}

\textbf{Synchronization gadget.} 
A synchronization gadget contains two paths for Red and Black respectively.
If the red avatar enters the synchronization gadget, the black avatar has to enter the gadget in given steps, otherwise, the red avatar can enter rapid path so that it can arrive the finish gadget swiftly; after the black enters the synchronization gadget, the red avatar has to leave the gadget, and it must enter the gadget again in given steps, and vice versa.
This progress simulate alternate move of two players in $G_2$.

The synchronization gadget could be constructed by other gadgets, and it is illustrated in Figure \ref{synchronization}.
At the beginning, the up RRB door gadget is in open state, while the BRB door, the down RRB door, the up BBR door and, the down BBR door and the RBR door gadgets are in closed state.
We assume that the black avatar enters assignment progress while the red avatar enters the top switch gadget.
There are so many switch gadgets for the red avatar to traverse that the black avatar has sufficient time to finish the assignment progress.
The black avatar should close the up RRB door gadget in certain steps, otherwise, the red avatar can enter rapid path (coloured in green).
Thus the black avatar should finish the assignment progress as soon as possible, while it has to open the up BBR door and the BRB door gadget before it closes the up RRB door gadget.
On the other hand, the red avatar should enter the merge gadget when the up RRB door gadget is closed by its opponent, otherwise, the red avatar will be stuck in the synchronization gadget, if the BRB door gadget is closed by the black avatar in certain steps.
Thus the red avatar has to open the down RRB door gadget, and it should traverse the BRB door gadget as soon as possible.
Then the red avatar can enter a switch gadget connecting to the down RRB gadget in certain steps.
This forces the black avatar to close the down RRB gadget, since the red avatar can enter rapid path through the door immediately.
Then the red avatar enters the assignment progress, and the black avatar enters the top switch gadget.
Next, the situation is symmetric.

Again, we use a example to describe the hardness framework. 
Figure \ref{EXPTIMEHframework} illustrates the framework corresponding to the formula game where
$X=\{x_1, x_2 \}$, $Y=\{y_1\}$, $\text{I-WIN} = (\neg x_1 \wedge y_1) \vee x_2$ and $\text{II-WIN} = x_2$.
Notice, I-WIN and II-WIN in the example are not formulas in 12DNF, however, it is easy to construct framework corresponding to the instance of $G_2$ where I-WIN and II-WIN are 12DNF via this example.
At the beginning, the red avatar leaves the synchronization gadget, and the black avatar enters the synchronization gadget.
Then the red avatar has to finish the assignment progress, so it has to choose one of two switch gadgets corresponding to variables $x_1$ and $x_2$.
Suppose that the red avatar choose the switch gadget corresponding to $x_2$, and it leaves the switch gadget through the right exit.
This indicates that player I assigns variable $x_2$ false in the formula game.
Then, the red avatar has to close the RBR door and the RRR door gadgets corresponding to $x_2$.
After operating door gadgets, the red avatar enters merge gadgets and a switch gadget.
Now, the red avatar can choose between entering the synchronization gadget or entering check progress (coloured in blue).
The red avatar can traverse the check path if and only if the formula $\text{I-WIN} = (\neg x_1 \wedge y_1) \vee x_2$ is true under current assignment.
If the red avatar can not enter check progress, it has to move back to the synchronization gadget, then the black avatar leaves the synchronization gadget, which indicates that player II begins to assign variables in $Y$.

It is obvious that the player I has a forced win in the formula game if and only if the red avatar arrives the finish gadget before its opponent does.
Moreover, number of gadgets in the framework is polynomial, so that the reduction could be established in polynomial time.

\textbf{Remarks.}
Similar to the open-close (RRR) door gadget, other door gadgets may contain additional paths connecting to finish gadgets, and we will see these types of gadgets in Section 4 and 5.
Formula games $G_1$ through $G_6$ described in \cite{difficultgames} are all EXPTIME-complete, and we can establish frameworks corresponding to these games by similar method.

\newpage

\section{Complexity of Janggi}

Before we discuss complexity of Xiangqi and Janggi, we introduce the notations used in this note.
Pieces table of Xiangqi and Janggi are illustrated in Figure \ref{piecestable}.
We use two capital letters to represent one piece in this note, for example, ``RG'' means ``red general''.
For convenience, we use pieces of same size, while pieces of Janggi differ in size according to their rank usually.
Pieces of Janggi are normally coloured in red and green, but we use red and black pieces here.
Especially, to avoid dazzle, we use pieces without Chinese characters to represent unmoveable rooks in figures.

\begin{figure}[htbp]
	\centering  
	\includegraphics[width=0.7 \linewidth]{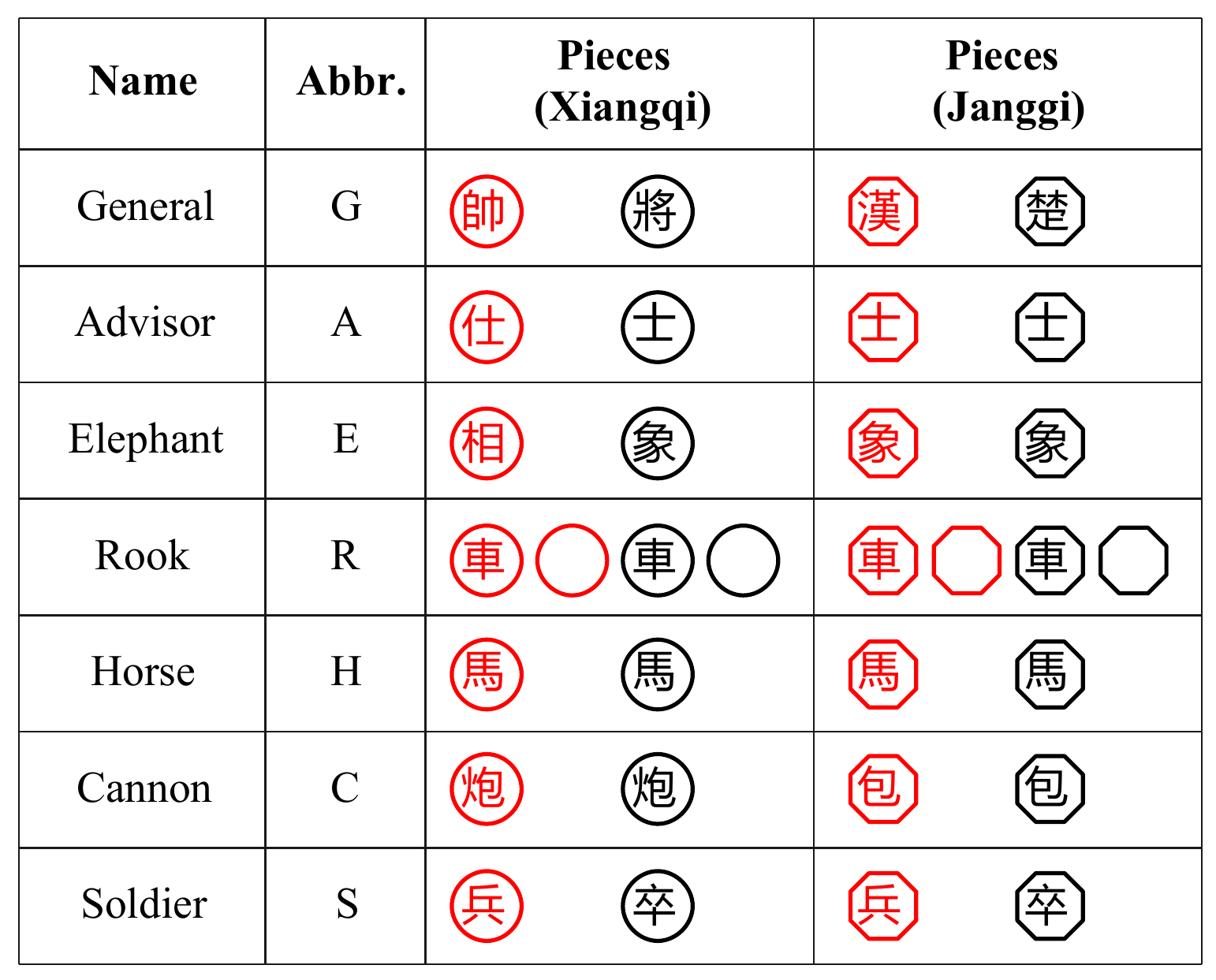}  
	\caption{Pieces table of Xiangqi and Janggi.}  
	\label{piecestable}   
\end{figure}

It is easier to construct gadgets in Janggi than in Xiangqi, so we discuss complexity of Janggi first.
Decision problem of Janggi is to decide whether Red has a forced win in a given position.
In our reduction, Red and Black control cannons which indicates the avatars in hardness frameworks, and we call these two cannons main cannons.
In all gadgets of Janggi, main cannons traverse the paths which are composed of elephants, and the elephants are protected by a large number of unmoveable rooks so that main cannon can not capture these pieces.

\subsection{NP-hardness of Janggi}

To prove NP-hardness of Janggi, we need to construct all gadgets of NP-hardness framework in Janggi.

\textbf{Start gadget.} 
Figure \ref{Jstart} illustrates a start gadget of Janggi. 
In this gadget, only main RC could move, and it can move east to leave the start gadget. 
If main RC captures any black pieces, it will be captured by one of the BR's immediately. 

\begin{figure}
	\begin{minipage}[t]{0.5\linewidth}
		\centering
		\includegraphics[width=0.7 \linewidth]{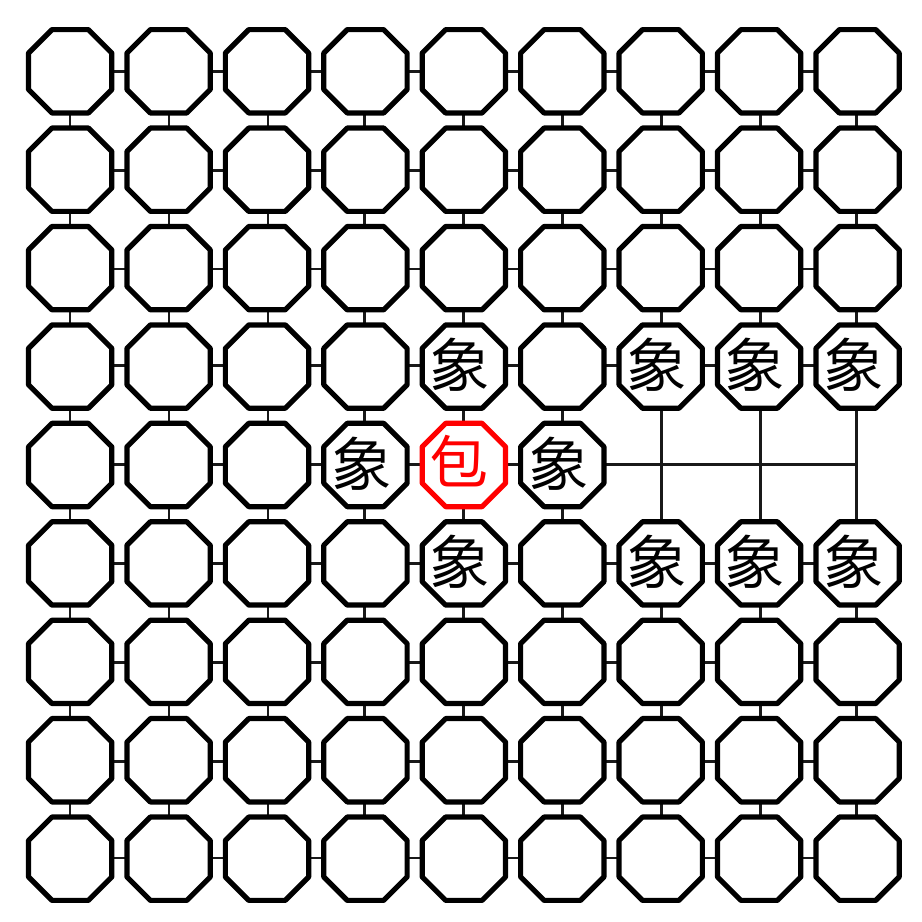}
		\caption{Start gadget of Janggi.}
		\label{Jstart}
	\end{minipage}%
	\begin{minipage}[t]{0.5\linewidth}
		\centering
		\includegraphics[width=0.7 \linewidth]{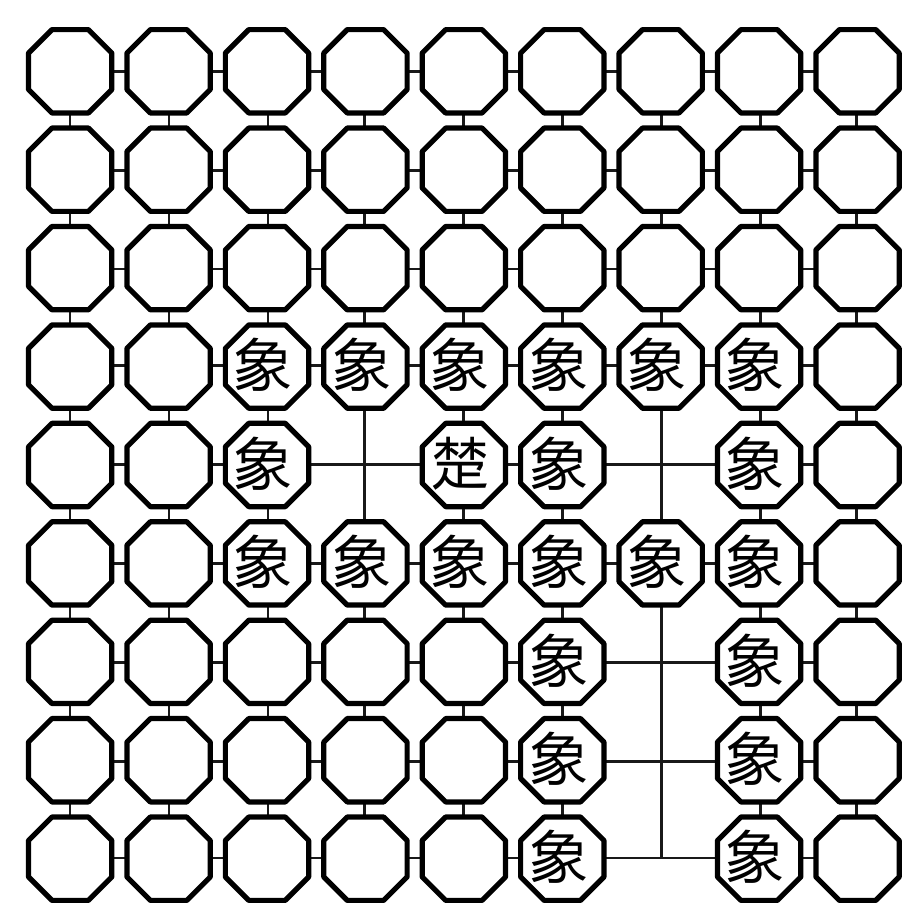}
		\caption{Finish gadget of Janggi.}
		\label{Jfinish}
	\end{minipage}
\end{figure}

\textbf{Finish gadget.} 
Figure \ref{Jfinish} illustrates a finish gadget of Janggi. 
Once main RC enters a finish gadget from south, the BG will be checkmated immediately. 
We assume that, in a finish gadget, the BG is at the corner of ``palace'', and other gadgets are all away from palace. 
Thus, moves along the diagonal lines in palace could be ignored.

\textbf{Turn gadget.} 
Figure \ref{Jturn} illustrates a turn gadget of Janggi. 
Main RC can enter a turn gadget from north, and it can move east to leave. 

\begin{figure}
	\begin{minipage}[t]{0.5\linewidth}
		\centering
		\includegraphics[width=0.7 \linewidth]{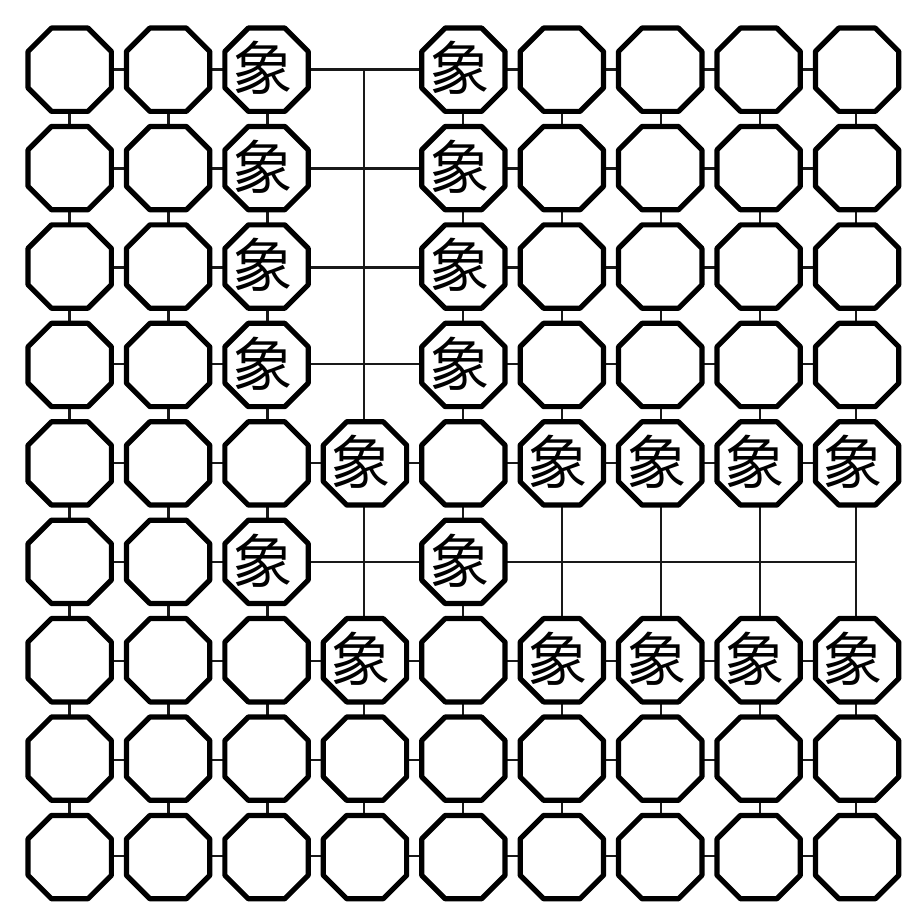}
		\caption{Turn gadget of Janggi.}
		\label{Jturn}
	\end{minipage}%
	\begin{minipage}[t]{0.5\linewidth}
		\centering
		\includegraphics[width=0.7 \linewidth]{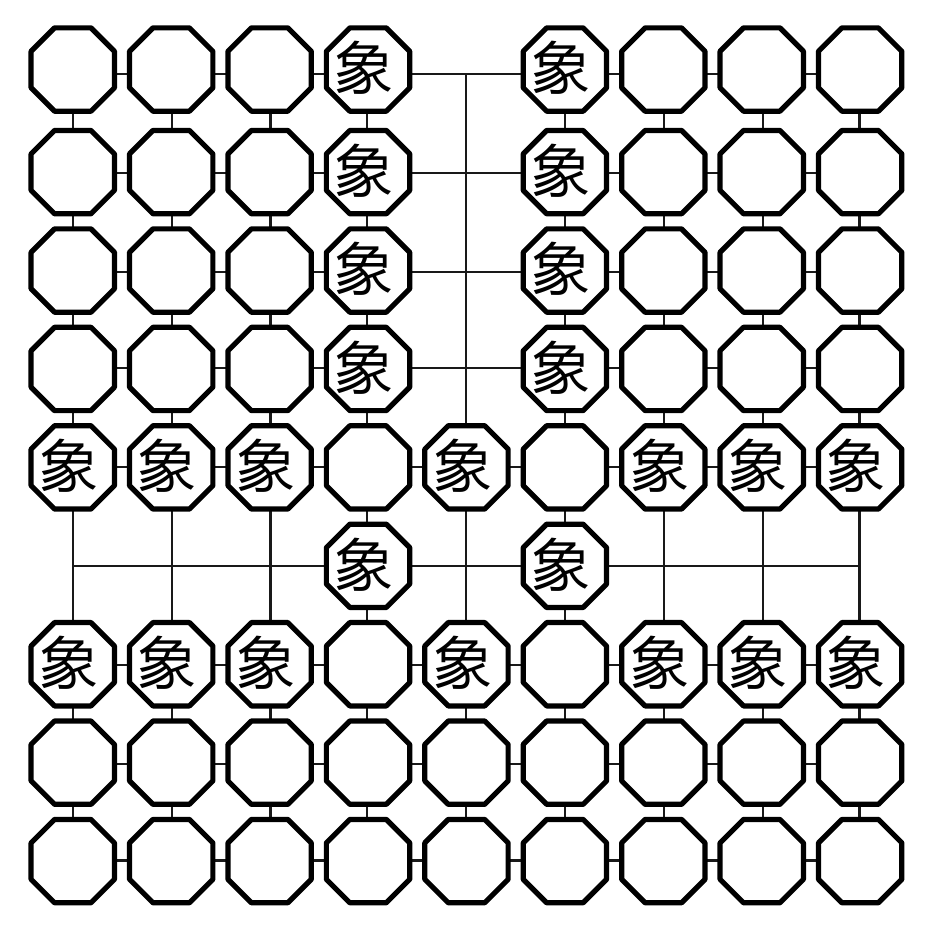}
		\caption{Switch and merge gadgets of Janggi.}
		\label{Jswitch}
	\end{minipage}
\end{figure}

\textbf{Switch and merge gadgets.} 
Figure \ref{Jswitch} illustrates a switch gadget of Janggi. 
When main RC enters a switch gadget from north, it can move east or west to leave. 
Moreover, a merge gadget is identical to a switch gadget in Janggi.

\textbf{One-way gadget.} 
Figure \ref{Joneway} illustrates an one-way gadget of Janggi. 
In this gadget, only one BR can move. 
Main RC can only traverse the gadget from north to south. 
When main RC enters an one-way gadget from north, it forces the BR to stay where it is, otherwise, the BR will be captured by main RC.
Then, main RC can move east and south to leave the gadget.
If main RC enters an one-way gadget from south, the BR can move south to threaten main RC, so that main RC can not traverse the gadget reversely. 

\begin{figure}[htbp]
	\centering  
	\includegraphics[width=0.35 \linewidth]{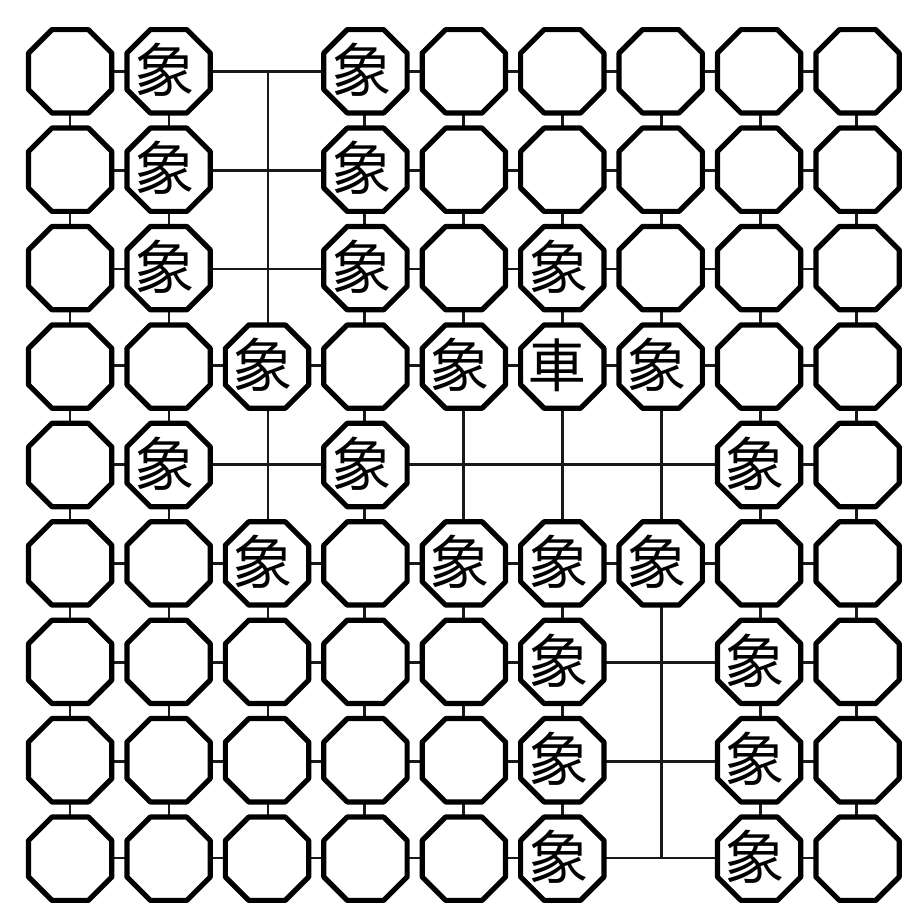}  
	\caption{One-way gadget of Janggi.}  
	\label{Joneway}   
\end{figure}

\begin{figure}[htbp]
	\centering  
	\includegraphics[width=0.47 \linewidth]{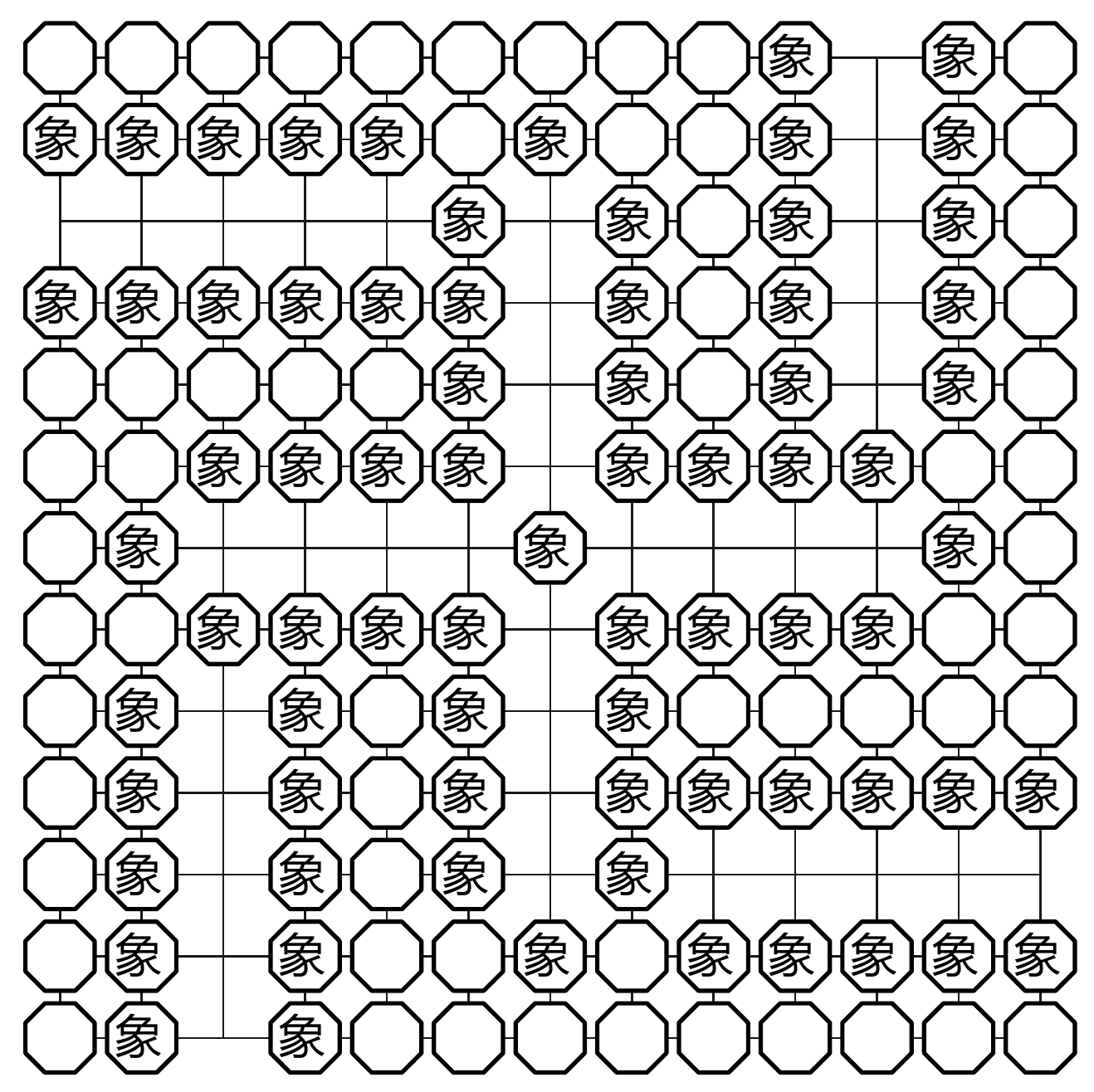}  
	\caption{Crossover gadget of Janggi.}  
	\label{Jcrossover}   
\end{figure}

\textbf{Crossover gadget.} 
Figure \ref{Jcrossover} illustrates a crossover gadget of Janggi. 
Main RC can traverse a crossover gadget through two paths, and there is no leakage between two paths of the gadget.
For one path, main RC enters the gadget from west, and then it moves south and east to leave. 
Situation of another path is similar. 
Notice, the BE at centre of the gadget will never be captured by main RC. 

\textbf{Door gadget.} 
Figure \ref{Jdoor} illustrates a door gadget of Janggi. 
In this gadget, only the BE at f6 can move.
Points d9 and a6 are the entrance and the exit of the open path respectively.
Path of point sequence (d9, d6, f6, d6, a6) is an open path.
Points i1 and n4 are the entrance and the exit of the traverse path respectively. 
Path of point sequence (i1, i4, n4) is a traverse path. 
In Figure \ref{Jdoor}, the gadget is in the closed state. 
Since the BE at f6 protects point i4, main RC can not traverse the traverse path. 
When main RC enters a door gadget from north, it can capture the BE at f6, which makes the door gadget in the open state.
If the BE at f6 moves to i4 in order to avoid being captured, main RC can directly move west to leave the gadget.
Since the BE can not move any more once it moves to i4.
When main RC traverses the traverse path, it still can capture the BE at i4.

\begin{figure}[htbp]
	\centering  
	\includegraphics[width=0.55 \linewidth]{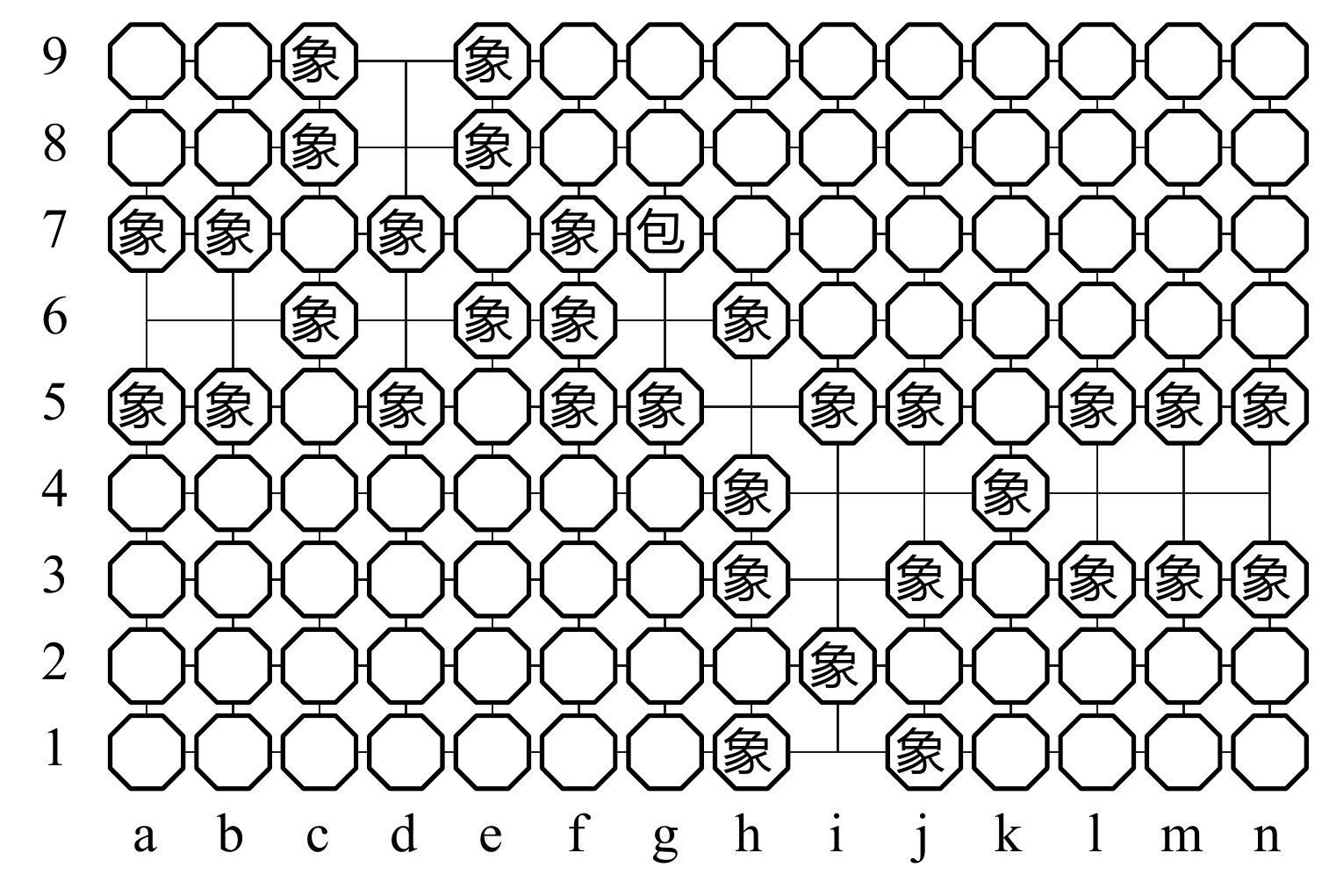}  
	\caption{Door gadget of Janggi.}  
	\label{Jdoor}   
\end{figure}

As all gadgets of NP-hardness framework have been constructed in Janggi, we obtain the following result.

\begin{proposition}
	Janggi is NP-hard.
\end{proposition}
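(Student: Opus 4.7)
The plan is to invoke the NP-hardness framework of Section 3.1 directly: since every gadget required by that framework (start, finish, turn, switch, merge, one-way, crossover, and door) has just been realised in Janggi, what remains is to assemble them into a polynomial-time reduction from 3-SAT and verify that the assembled position behaves as intended. First I would take an arbitrary 3-CNF formula $\varphi$ with $n$ variables and $m$ clauses, and lay out on a sufficiently large Janggi board the exact wiring pattern of Figure \ref{NPHframework}: one switch gadget per variable, a door gadget for each literal occurrence, crossover gadgets wherever two wires must intersect, turn gadgets to route wires around the board, one-way gadgets to enforce direction, and a single start and finish gadget. Because each gadget occupies a constant number of board cells and the framework uses $O(n+m)$ gadgets plus $O((n+m)^2)$ crossovers, the resulting position has polynomial size and can be written down in polynomial time.

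Next I would argue correctness. In the direction that a satisfying assignment yields a win, I would have main RC follow the path prescribed by the framework: at each variable's switch gadget it exits toward the literal made true by the assignment, opens the corresponding door gadgets along the way (by capturing the pivotal BE in each), and finally traverses three satisfied door gadgets at each clause to reach the finish gadget, which checkmates the BG. For the converse, I would observe that the no-leakage properties verified for each individual gadget (the crossover's two paths are independent, the door's traverse path requires the BE to have been displaced, the one-way blocks reverse traversal, etc.) imply that main RC's only globally consistent route through the construction is one that corresponds to picking a truth value per variable and then a satisfied literal per clause; such a route exists iff $\varphi$ is satisfiable.

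The main obstacle I anticipate is not the combinatorics of the framework itself, which is already established, but rather showing that the Janggi-specific gadgets continue to work correctly when composed, i.e.\ that Black has no useful alternative move in the assembled position. I would handle this by noting that in every gadget constructed in Section 4.1 the black pieces other than the designated movable one are unmoveable rooks or are pinned by the threat of immediate capture by main RC (or the protective BR's), and that the palace of BG is placed so that Black's general is effectively idle until checkmated. Consequently any move Black can make either leaves the position combinatorially equivalent or hands Red a faster win, so the framework analysis carries over verbatim. Together with the polynomial size bound, this gives a polynomial-time many-one reduction from 3-SAT, establishing NP-hardness of Janggi.
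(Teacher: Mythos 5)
Your proposal is correct and follows essentially the same route as the paper: the paper's own justification is simply that all gadgets of the Section 3.1 framework (start, finish, turn, switch/merge, one-way, crossover, door) have been constructed in Janggi, so the generic reduction from 3-SAT applies. You merely spell out explicitly the polynomial size bound, the two directions of correctness, and the gadget-composition concern that the paper leaves implicit.
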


\subsection{PSPACE-hardness of Janggi}

To prove PSPACE-hardness of Janggi, we need to construct all gadgets of PSPACE-hardness framework in Janggi.
Fortunately, the gadgets constructed in last section could be reused here.
So we just need to construct the open-close door gadget.

\textbf{Open-close (RRR) door gadget.} 
An open-close door gadget of Janggi is more complicated than gadgets above, and it is illustrated in Figure \ref{JRRRdoor}. 
In this gadget, only one RC (at q20), two RR's (at i29 and q7), one BC (at m20), one BR (at m21) and two BE's (at i19 and q21) can move. 
We call the moveable RC and BC control cannons.
When control RC stops at q20, the gadget is considered in the closed state; when control RC stops at i20, the gadget is considered in the open state.

\begin{figure}[p]
	\centering  
	\includegraphics[width=1.0 \linewidth]{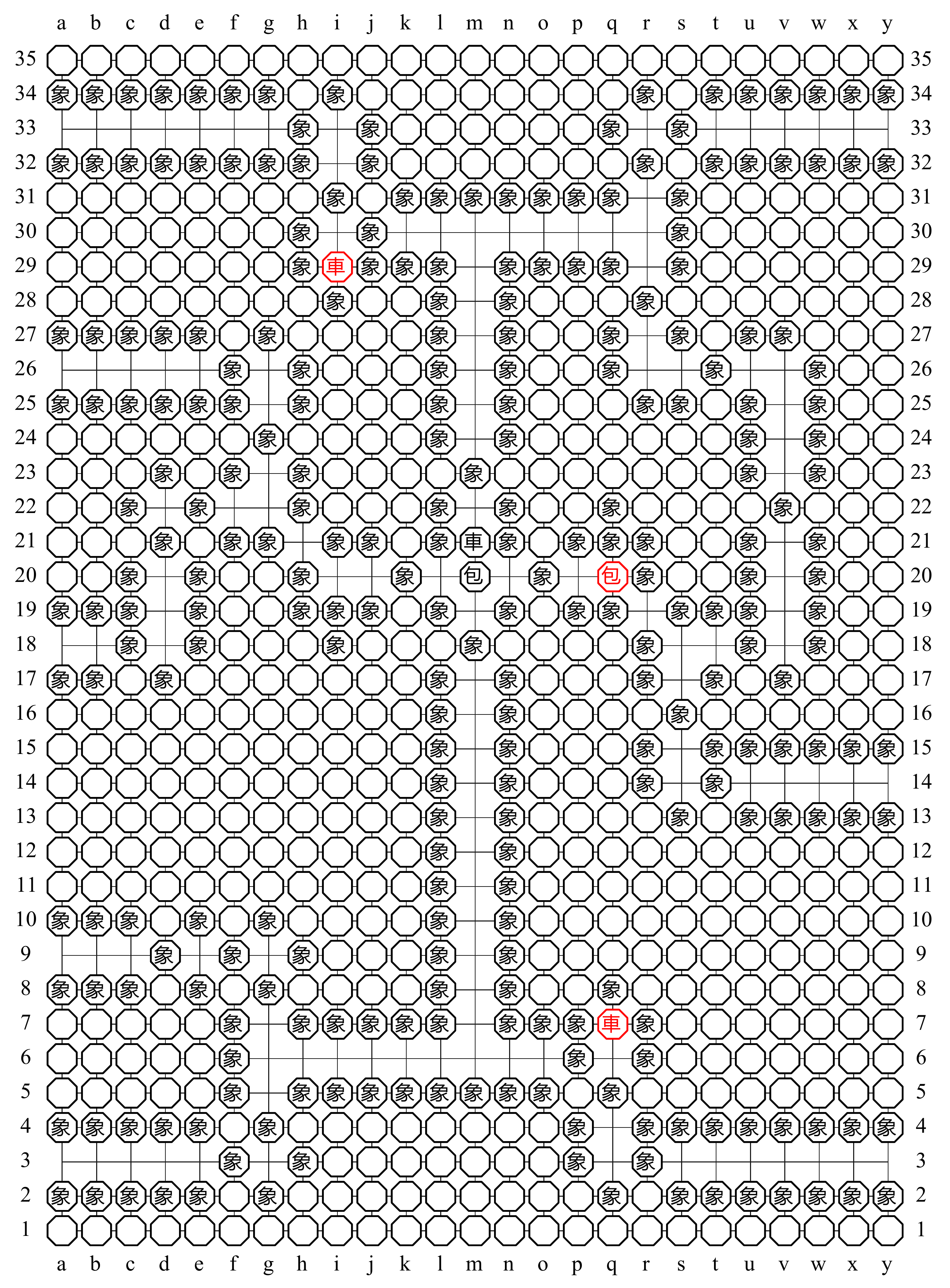}  
	\caption{Open-close (RRR) door gadget of Janggi.}  
	\label{JRRRdoor}   
\end{figure}

Points a3 and a9 are the entrance and the exit of the open path respectively. 
Path of point sequence (a3, g3, g6, g9, e9, a9) is an open path. 
Points a18 and a26 are the entrance and the exit of the traverse path respectively. 
Path of point sequence (a18, d18, d22, g22, g26, a26) is a traverse path. 
Points y33 and y14 are the entrance and the exit of the close path respectively. 
Path of point sequence (y33, r33, r30, r26, v26, v18, s18, s14, y14) is a close path.
Points a33 and y3 are entrances of rapid checkmate paths. 

Main RC can open the gadget by traversing the open path. 
When main RC moves to g6, control BC has to move to m6 to obstruct main RC. 
Otherwise, main RC can enter the rapid checkmate path. 
After control BC leaves m20, control RC can move to l20 and i20 to obstruct the BE at i19, which makes the gadget in the open state.
Then, when main RC moves to e9, control BC should move back to m20 to restrict movements of control RC.

Main RC can traverse the traverse path if and only if the gadget is in the open state.
If control RC stops at i20, main RC can traverse the path safely. 
If the gadget is in the closed state, main RC will be captured by the BE at i19 when it move to g22.

When main RC traverses the close path, it has to close the gadget.
To traverse the path, main RC must pass point s18, however, the point is protected by the BE at q21. 
Thus, after main RC moves to r30, and it forces control BC to move to m30, control RC must move to q20 to obstruct the BE, which makes the gadget in the closed state.
Then, when main RC moves to v18, control BC should move back to m20 to restrict movements of control RC.

Here, we consider some nontrivial improper moves in an open-close gadget:

(1) If control RC attempts to leave the gadget, it must pass point m20. 
However, the BR at m21 protects point m20. 

(2) If control BC attempts to leave the gadget, it must pass points i30 or q6. 
However, these two points are protected by two RR's (at i29 and q7).

(3) If the RR's (at i29 and q7) move to i30 or q6 without capturing black pieces.
These moves do not benefit Red, and Black can just ignore them.

(4) If the BR at m21 moves to m20 in order to obstruct control RC, control RC can stops at i20 or q20 to force the BR to leave m20.

(5) If the BE's (at i19 and q21) move to g22 or s18 without capturing main RC, the BE's will be captured when main RC traverses the paths.

(6) When main RC traverses the open path, it moves to e9, and control BC may move to m20. 
Then, if main RC moves back to g6, it forces control BC to move to m6 again. 
However, it is no benefit to Red, because Red requires two moves while Black just requires one move. 
The difference on number of moves also avoids repeating moves.

(7) When main RC traverses close path, if it moves back to r30 after arriving v18. 
The situation is similar to case (6).

(8) Main RC moves into m column. 
It implies that main RC has passed i30 or q6. 
However, now that main RC can arrive i30 or q6, it can enter rapid checkmate paths. 
Thus there is no need for main RC to move into m column.

(9) If control BC does not move back to m20 when main RC leaves the gadget.
Movements of control RC will not be restricted, then Red can open the gadget at any time.
Thus control BC should move back to m20 as far as possible.

As all gadgets of PSPACE-hardness framework have been constructed in Janggi, we obtain the following result.

\begin{proposition}
	Janggi is PSPACE-hard.
\end{proposition}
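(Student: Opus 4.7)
The plan is to reduce from TQBF using the PSPACE-hardness framework of Section 3.2. The NP-hardness proof has already supplied Janggi realizations of the start, finish, turn, switch, merge, one-way, crossover and (single-use) door gadgets, and inspection shows that all of these remain correct when traversed more than once, since in each case the moveable pieces either return to their original configuration or are constrained by an unmoveable rook whose threat does not disappear. Consequently, the only new construction required is a reusable open-close (RRR) door whose two states can be toggled an arbitrary number of times by the single main RC avatar, without giving either player an unintended escape to the finish gadget.

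The construction I would attempt uses the state of an auxiliary ``control'' red cannon as the memory bit: place a control RC on a central square so that, depending on which of two squares it occupies, it either does or does not block a guarding black elephant's line of fire across the traverse path. A complementary control BC sits at an anchor square from which it can restrict the control RC's motion; escort rooks, fixed to their columns by the usual wall of unmoveable rooks, prevent either control cannon from leaving the gadget. The open path is then routed so that, to pass a square threatened by control BC, main RC must first force control BC off its anchor, which frees control RC to slide to its ``open'' square; the close path is routed symmetrically, using a second black elephant as the guard, so that traversing it forces control RC back. To make Black's cooperation compulsory, I would append short rapid-checkmate paths leading straight to a BG in palace, positioned so that any failure by Black to move control BC at exactly the prescribed moment yields main RC an immediate winning entry.

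The main obstacle is the exhaustive analysis of improper moves. The PSPACE framework requires that the gadget be traversed exponentially many times, so I cannot rely on a ``single-use'' argument; I must rule out (i) either control cannon escaping through the anchor squares, (ii) the escort rooks drifting into the cannon's path, (iii) the guarding elephants pre-capturing main RC, (iv) main RC shuttling halfway along the open or close path to burn Black's tempo, and (v) Black simply refusing to reset the control BC. The delicate point is the tempo accounting in case (iv): I would show that any partial re-traversal forces Red to spend strictly more moves than Black, so repetition cannot gain Red a move, while Black's optimal reply is always to return control BC to the anchor square as soon as main RC leaves the gadget. Once every such deviation is either blocked by an unmoveable rook or defeated by the tempo count, the open-close door behaves exactly as specified in Section 3.2, and combining it with the previously constructed gadgets yields a polynomial-time reduction from TQBF to Janggi, establishing the proposition.
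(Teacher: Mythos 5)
Your proposal follows essentially the same route as the paper: reuse the NP-hardness gadgets and add a single new reusable open-close door whose state is stored in the position of a control red cannon that blocks or unblocks a black elephant guarding the traverse path, with an anchored control black cannon, rapid-checkmate branches to compel Black's cooperation, unmoveable-rook walls to confine the control pieces, and a tempo count to defeat partial re-traversals. This matches the paper's Figure~\ref{JRRRdoor} construction and its nine-case improper-move analysis.
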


\subsection{EXPTIME-hardness of Janggi}

To prove EXPTIME-hardness of Janggi, we need to construct all gadgets of EXPTIME-hardness framework in Janggi.
The gadgets constructed in last two sections could be reused, and the start, finish, turn, switch, merge and one-way gadgets for Black could be constructed symmetrically.
We do not need to construct new crossover gadget, since three types of crossover gadgets are identical in Janggi.
Moreover, the RRR door gadget is identical to the open-close gadget in last section, and BBB door, BRB door and RRB door gadgets could be constructed symmetrically, so we just need to construct RBR door and BBR door gadgets.
Since paths are composed of red elephants and black elephants in our reduction, the border between unmoveable red rooks and black rooks must be taken into consideration, and it is illustrated in Figure \ref{Jborder}.

\begin{figure}[htbp]
	\centering  
	\includegraphics[width=0.35 \linewidth]{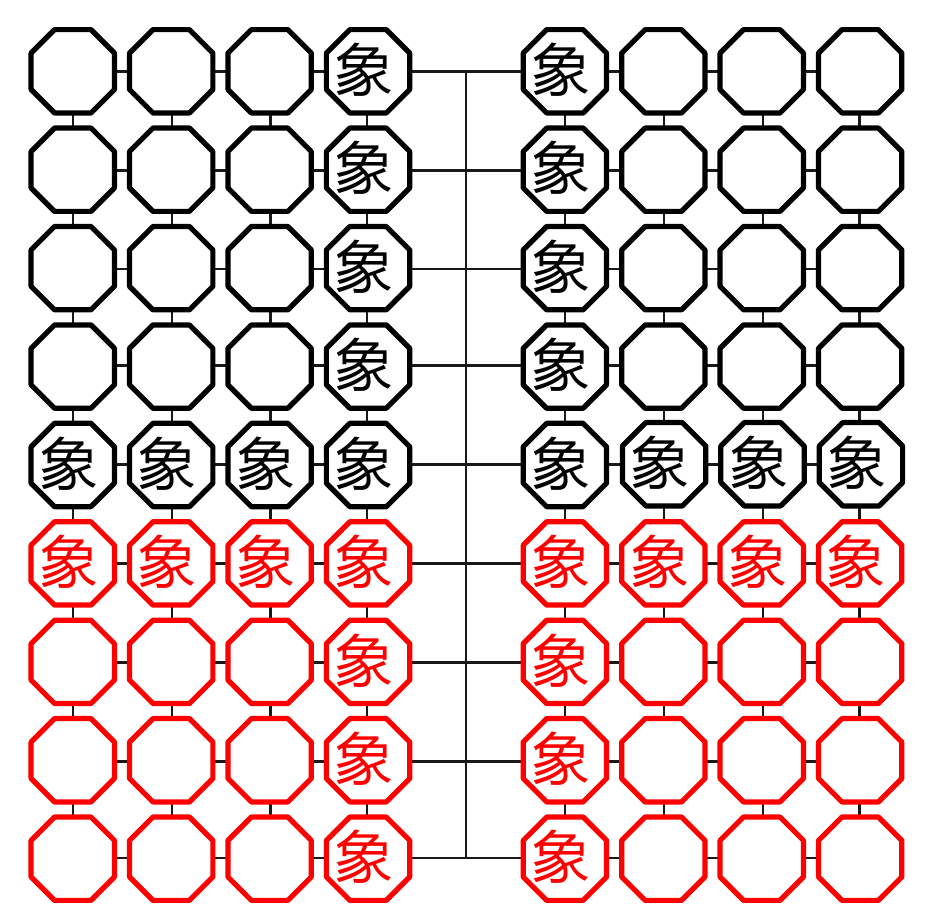}  
	\caption{Border between Red and Black in Janggi.}  
	\label{Jborder}   
\end{figure}

\textbf{RBR door gadget.} 
Figure \ref{JRBRdoor} illustrates a RBR door gadget of Janggi. 
In this gadget, only one RC (at f20), two RR's (at i29 and q7), one RE (at c18), one BC (at m20), three BR's (at g14, g23 and m21) and one BE (at q21) can move. 
We call the moveable RC and BC control cannons.
When control RC stops at f20, the gadget is considered in the closed state; when control RC stops at q20, the gadget is considered in the open state.

Points y33 and y14 are the entrance and the exit of the open path respectively. 
Path of point sequence (y33, r33, r30, r26, v26, v18, s18, s14, y14) is an open path for Red. 
Points a14 and a23 are the entrance and the exit of the traverse path respectively.
Path of point sequence (a14, f14, f21, f23, a23) is a traverse path for Black. 
Points a3 and a9 are the entrance and the exit of the close path respectively.
Path of point sequence (a3, g3, g6, g9, e9, a9) is a close path for Red.
Points a33 and y3 are entrances of rapid checkmate paths for Red.

When main RC traverses the open path, it has to open the gadget.
To traverse the path, main RC must pass point s18, however, the point is protected by the BE at q21. 
Thus, after main RC moves to r30, and it forces control BC to move to m30, control RC must move to q20 to obstruct the BE, which makes the gadget in the open state.
Then, when main RC moves to v18, control BC should move back to m20 to restrict movements of control RC.

Main BC can traverse the traverse path if and only if the gadget is in the open state.
If control RC stops at q20, main BC can traverse the path successfully. 
If the gadget is in the closed state, main BC can not traverse the path since a cannon can neither capture cannons nor ``jump'' over cannons in Janggi.

Main RC can close the gadget by traversing the close path. 
When main RC moves to g6, control BC has to move to m6 to obstruct main RC. 
Otherwise, main RC can enter the rapid checkmate path. 
After control BC leaves m20, control RC can move to f20 to block the traverse path, which makes the gadget in the closed state.
Then, when main RC moves to e9, control BC should move back to m20 to restrict movements of control RC.

\begin{figure}[htbp]
	\centering  
	\includegraphics[width=1.0 \linewidth]{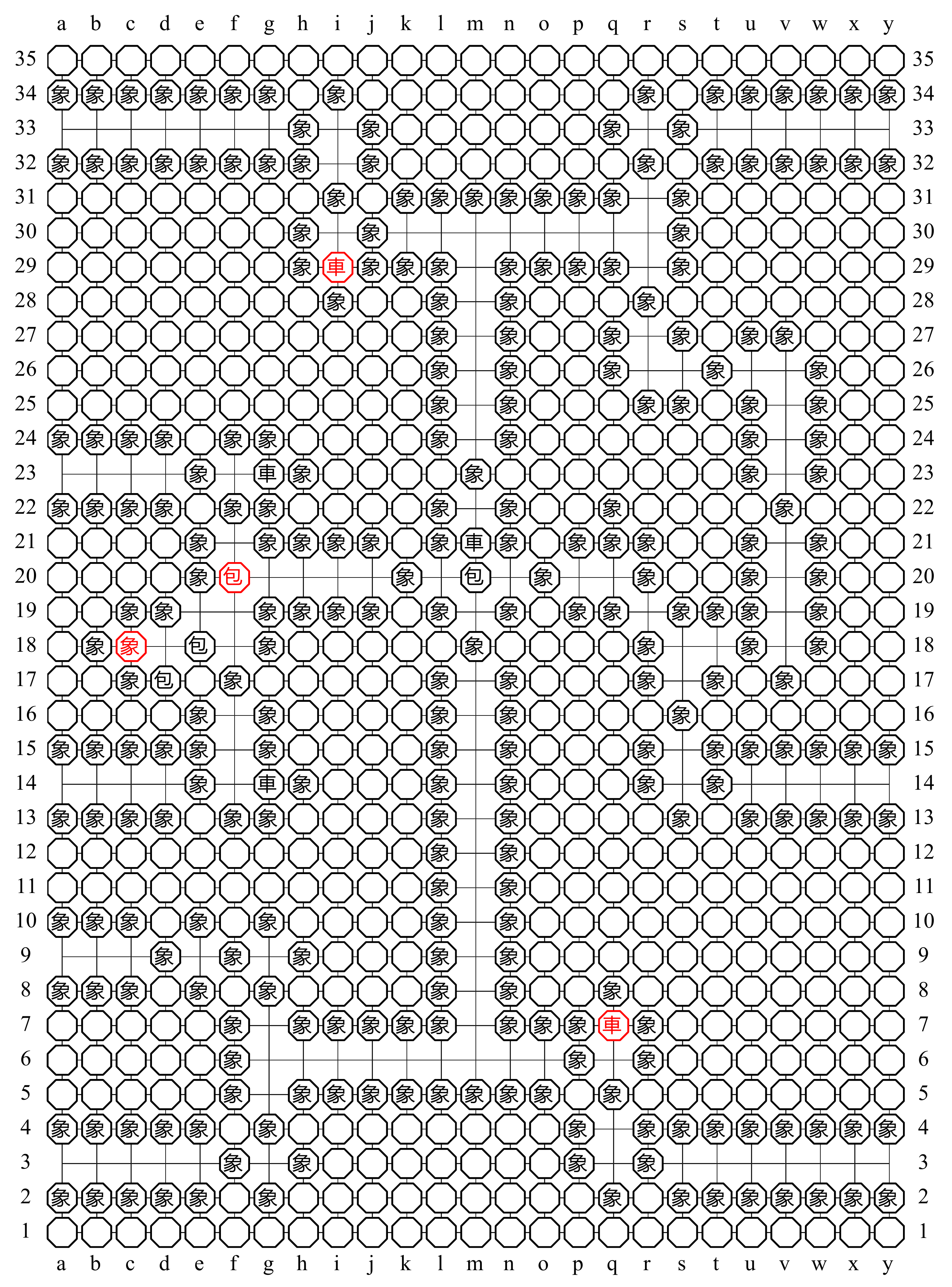}  
	\caption{RBR door gadget of Janggi.}  
	\label{JRBRdoor}   
\end{figure}

\begin{figure}[htbp]
	\centering  
	\includegraphics[width=1.0 \linewidth]{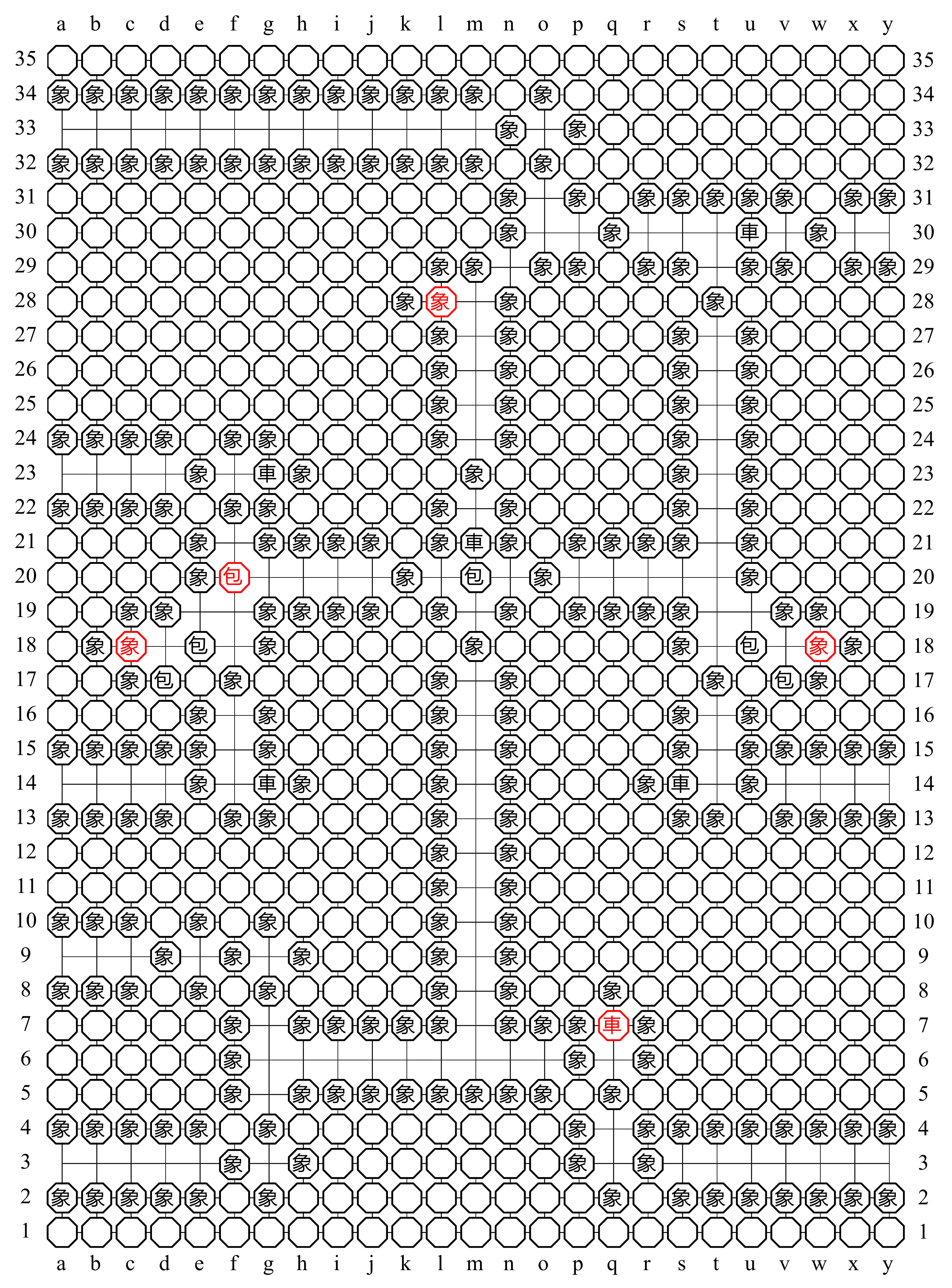}  
	\caption{BBR door gadget of Janggi.}  
	\label{JBBRdoor}   
\end{figure}

Here, we consider some different nontrivial improper moves in a RBR door gadget:

(1) Control RC can not leave the gadget since three BR's (at g14, g23 and m21) protect points f14, f23 and m20.

(2) If the RE at c18 moves to f20 without capturing one BC, main BC can capture the RE when it traverses the traverse path.

(3) Unless two BR's (at g14 and g23) can capture control RC, other moves of the BR's do not benefit Black.
Thus Red can just ignore moves of the BR's.

(4) When the gadget is in the open state, control BC moves to f20.
Then the RE at c18 can capture control BC, and control RC can move to j20 to prevent main BC from moving to m20.
The next time main RC traverse the open path or close path, it can enter rapid checkmate paths as no black piece can obstruct it.

(5) When the gadget is in the open state, main BC moves to f20.
The RE at c18 can capture main BC, then control BC has to capture the RE.
The situation is similar to case (4).

\textbf{BBR door gadget.} 
Figure \ref{JBBRdoor} illustrates a BBR door gadget of Janggi. 
In this gadget, only one RC (at f20), one RR (at q7), three RE's (at c18, l28 and w18), one BC (at m20) and five BR's (at g14, g23, m21, s14 and u30) could move. 
We call the moveable RC and BC control cannons.
When control RC stops at f20, the gadget is considered  in the closed state; when control RC stops at t20, the gadget is considered in the open state.

Points a33 and y30 are the entrance and the exit of the open path respectively. 
Path of point sequence (a33, o33, o30, t30, v30, y30) is an open path for Black. 
Points a14 and a23 are the entrance and the exit of the traverse path respectively. 
Path of point sequence (a14, f14, f21, f23, a23) is a traverse path for Black. 
Points a3 and a9 are the entrance and the exit of the close path respectively. 
Path of point sequence (a3, g3, g6, g9, e9, a9) is a close path for Red.
Point y3 is the entrance of rapid checkmate path for Red, and point y14 is the entrance of rapid checkmate path for Black.

Main BC can open the gadget by traversing the open path.
To traverse the path, main BC must pass point o30, however, the point is protected by the RE at l28. 
Thus control BC should move to m28 to obstruct the RE.
Once main BC enters the gadget, and control BC leaves m20, control RC has to move to n20, and it should stop at t20.
Otherwise, main BC can enter the rapid checkmate path.
Then, after main BC moves to t30, control BC should move back to m20 to restrict movements of control RC.

The traverse path in a BBR door gadget is similar to the path in a RBR door gadget.
Main BC can traverse the traverse path if and only if the gadget is in the open state, since a cannon can neither capture cannons nor ``jump'' over cannons in Janggi.

The close path in a BBR door gadget is also similar to the path in a RBR door gadget.
Main RC can close the gadget by traversing the close path.

Here, we consider some different nontrivial improper moves in a BBR door gadget:

(1) If control RC moves to t14 or t30 in order to leave the gadget, it will be captured by one of the BR's (at s14 and u30).

(2) If the RE at l28 moves to o30 without capturing main BC, main BC can capture the RE when it traverse the open path.

(3) If the RE at w18 moves to t20 without capturing one BC, Black can just ignore the move.

(4) When the gadget is in the closed state, if control BC moves to t20 in order to leave the gadget. 
The RE at w18 can capture control BC, and the next time main RC traverse the close path, it can enter rapid checkmate paths as no black piece can obstruct it.
Since control BC has been captured, main BC can neither traverse the open path nor enter rapid checkmate path.

(5) When the gadget is in the closed state, if main BC moves to t20.
It implies that main BC can enter the rapid checkmate path. 
Thus there is no need for main BC to move to t20.

(6) If main RC stops at g6 for a long time, control BC has to stop at m6.
So that main BC can not traverse the open path.
However, in the synchronization gadget of EXPTIME-hardness framework, the black avatar always traverse the open path of the BBR door gadgets before the red avatar traverse the close path.
Moreover, the red avatar can not stop at BBR door gadgets forever, since it has to prevent its opponent from entering rapid path in the framework.

(7) If main BC stops at t30 for a long time, the main RC can not close the BBR door gadget.
We discuss it in two cases:
(i) if main BC stop at the up BBR door gadget in the synchronization gadget, then the red avatar (main RC) can enter rapid path through the up RRB door gadget; (ii) if main BC stop at the down BBR door gadget, the black avatar (main BC) will be stuck in the synchronization gadget when the RBR door gadget is closed by the red avatar.

As all gadgets of EXPTIME-hardness framework have been constructed in Janggi, we obtain the following result.

\begin{theorem}
	Janggi is EXPTIME-complete.
\end{theorem}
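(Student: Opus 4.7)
The plan is to derive EXPTIME-completeness as a combination of EXPTIME-hardness (via the framework of Section 3.3, reducing from $G_2$) and membership in EXPTIME (via a straightforward game-tree argument). Since the reduction is already proved polynomial and the formula game $G_2$ is EXPTIME-complete, hardness reduces entirely to exhibiting all gadgets demanded by the framework inside a legal Janggi position, verifying that every unintended Janggi move either is harmless or leads to an immediate loss for the deviating side.

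First, I would collect the gadgets already built. The start, finish, turn, switch, merge, one-way and crossover gadgets for Red were constructed in Section 4.1 and work verbatim here; the RRR door from Section 4.2 is precisely the open-close door that the EXPTIME framework calls for; and because the Janggi rule set treats the two colours symmetrically modulo the elephant/rook border of Figure \ref{Jborder}, the Black versions of start, finish, turn, switch, merge, one-way, and the BBB/BRB/RRB door gadgets are obtained by reflecting the corresponding Red constructions across that border. The three crossover variants also collapse: the crossover from Section 4.1 lets each cannon traverse its path independently, so it serves as the RR, BB and RB crossover simultaneously. This leaves only the RBR door (Figure \ref{JRBRdoor}) and the BBR door (Figure \ref{JBBRdoor}) to be built afresh, together with the synchronization gadget, which is assembled from these doors according to the scheme in Figure \ref{synchronization}.

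The main obstacle, as usual in these reductions, is the gadget case analysis rather than the framework itself. Specifically, I would have to argue that: (i) in each new door gadget the two control cannons remain confined to their intended squares because every escape line is covered either by an unmovable rook or by a piece whose capture triggers a rapid-checkmate path; (ii) the traverse path for the opposite-colour cannon is blocked exactly when the door is in the closed state, using the Janggi rule that cannons can neither capture another cannon nor jump over one; (iii) the open/close path enforces the correct state change via the forcing sequence involving the control BC moving to m6 (respectively m28, m30) to avoid a rapid-checkmate line; and (iv) no time-wasting or delaying strategy — stalling at a door, early unblocking of an elephant, swapping the roles of the control cannons, etc. — benefits the deviating player once combined with the synchronization gadget's step-counting constraint. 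Items (i)–(iii) are exactly the bullet-point checks already laid out for each new gadget, and the synchronization argument of Section 3.3 handles (iv) uniformly.

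Finally, for the upper bound, I would observe that a Janggi position on the reduction board is described by a placement of $O(n)$ pieces on an $O(n)$-square board together with the side to move, so the total number of reachable positions is bounded by $2^{O(n \log n)}$. The game with repetitions resolved by the standard tournament convention can therefore be solved by backward induction (retrograde analysis) over its position graph in time polynomial in the number of positions, hence in $2^{O(n \log n)} = \mathrm{EXPTIME}$. Combining this with the hardness reduction from $G_2$ yields EXPTIME-completeness.
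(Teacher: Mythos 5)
Your proposal is correct and follows essentially the same route as the paper: EXPTIME membership by searching the (exponentially bounded) game tree, and EXPTIME-hardness by assembling all gadgets of the $G_2$-based framework in Janggi, reusing the Section 4.1--4.2 constructions, obtaining the Black and symmetric door gadgets by reflection, noting the three crossover variants coincide, and building only the RBR and BBR doors afresh with the accompanying improper-move case analysis. The only difference is presentational: you spell out the retrograde-analysis bound and the deviation checks that the paper leaves implicit in its two-line proof.
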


\begin{proof}
	Firstly, Janggi could be solved by a brute search algorithm in exponential time, thus Janggi is in EXPTIME. 
	Secondly, we have constructed all gadgets of EXPTIME-hardness framework in Janggi, thus Janggi is EXPTIME-hard.
\end{proof}

\textbf{Remarks.}
We do not need advisors, horses and soldiers in our reduction.
If we assume that the whole instance of Janggi is surrounded by unmoveable cannons, we can construct reduction even if the game board is infinite.
We do not discuss the reachability of the instance of Janggi here.

\newpage

\section{Complexity of Xiangqi}

Similar to complexity analysis of Janggi, decision problem of Xiangqi is to decide whether Red has a forced win in a given position.
Red and Black control cannons which indicates the avatars in hardness frameworks, and we call these two cannons main cannons.
In all gadgets of Xiangqi, main cannons traverse the paths which are composed of elephants and horses, and the elephants and horses are protected by a large number of unmoveable rooks so that main cannon can not capture these pieces.

\subsection{NP-hardness of Xiangqi}

To prove NP-hardness of Xiangqi, we need to construct all gadgets of NP-hardness framework in Xiangqi.

\textbf{Start gadget.} 
Figure \ref{Xstart} illustrates a start gadget of Xiangqi.
In this gadget, only main RC is moveable, and it can move east to leave a start gadget.

\begin{figure}[htbp]
	\begin{minipage}[htbp]{0.5\linewidth}
		\centering
		\includegraphics[width=0.7 \linewidth]{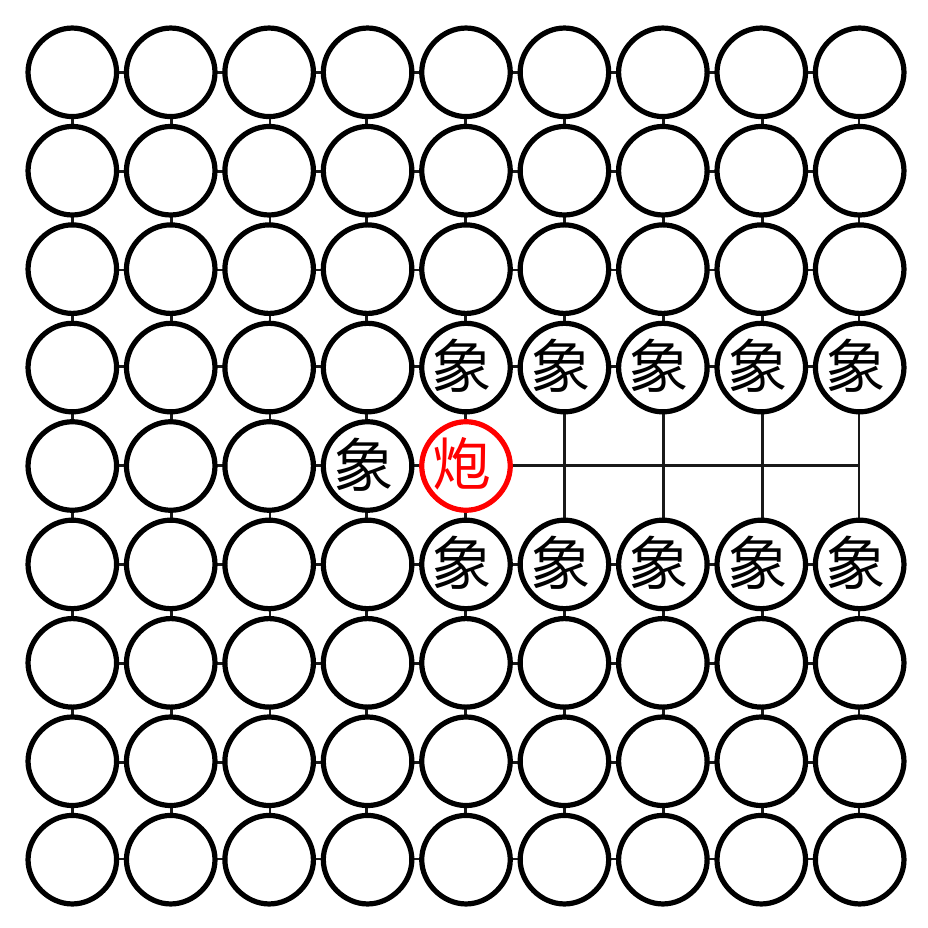}
		\caption{Start gadget of Xiangqi.}
		\label{Xstart}
	\end{minipage}%
	\begin{minipage}[htbp]{0.5\linewidth}
		\centering
		\includegraphics[width=0.7 \linewidth]{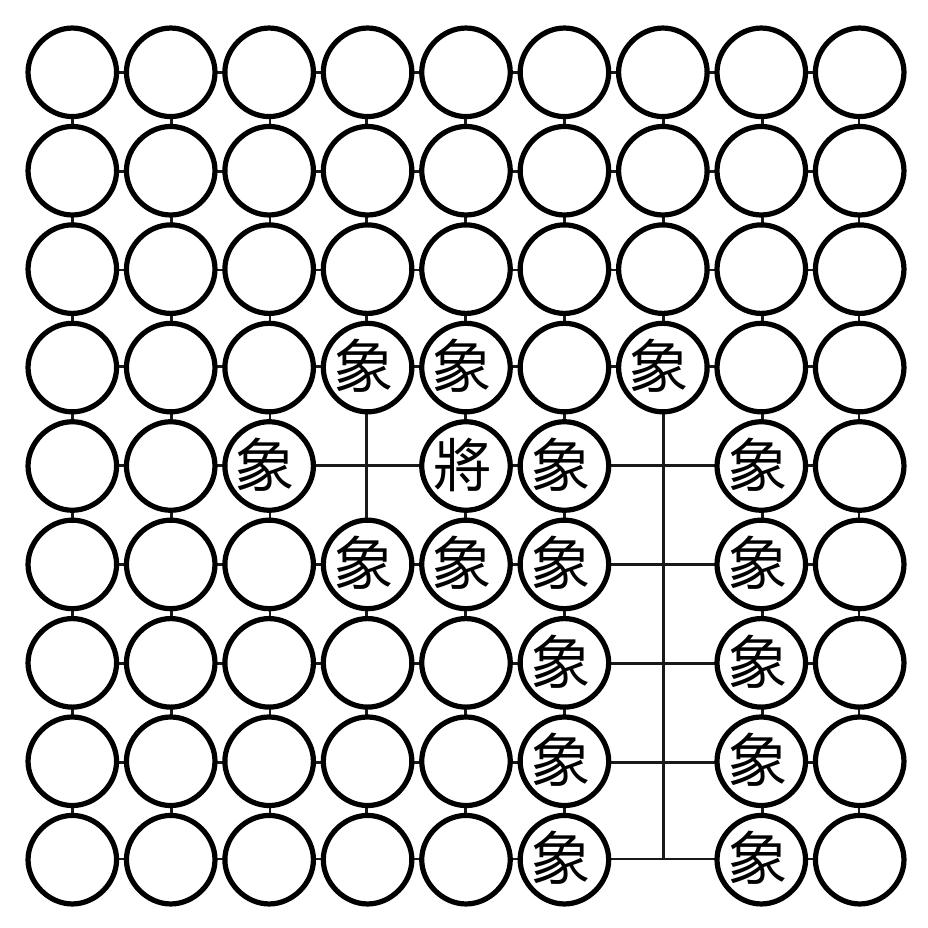}
		\caption{Finish gadget of Xiangqi.}
		\label{Xfinish}
	\end{minipage}
\end{figure}

\begin{figure}[htbp]
	\begin{minipage}[htbp]{0.5\linewidth}
		\centering
		\includegraphics[width=0.7 \linewidth]{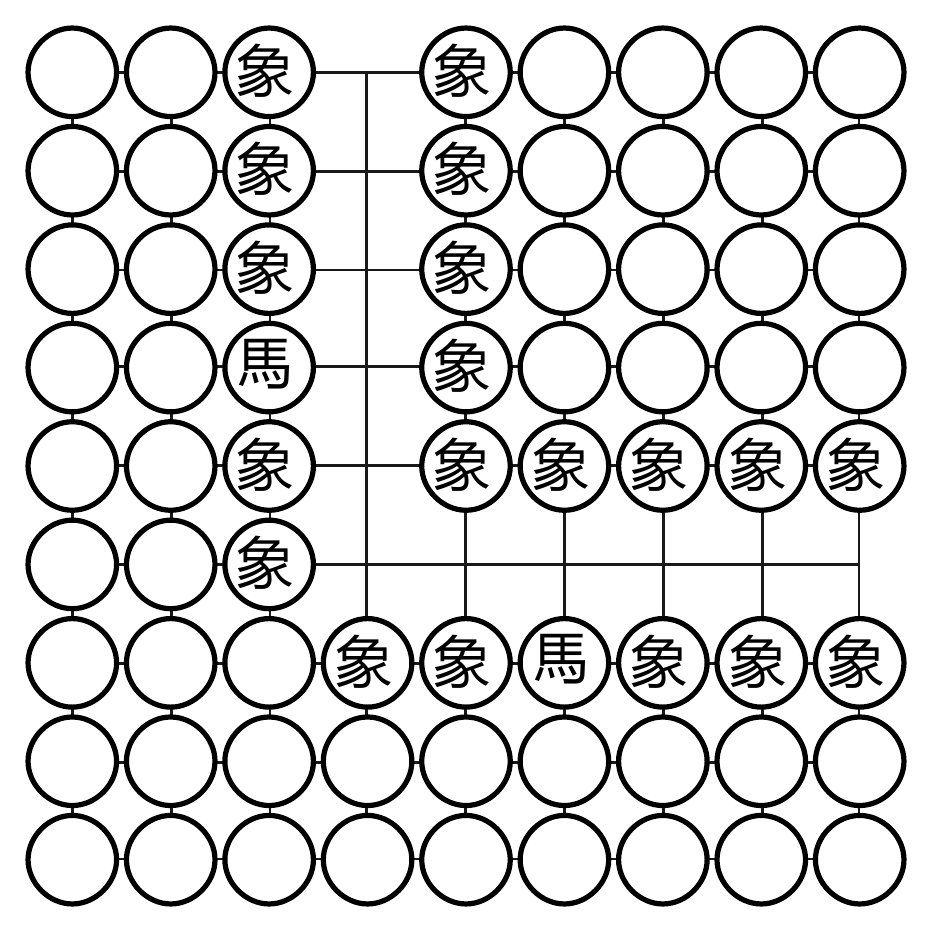}
		\caption{Turn gadget of Xiangqi.}
		\label{Xturn}
	\end{minipage}%
	\begin{minipage}[htbp]{0.5\linewidth}
		\centering
		\includegraphics[width=0.7 \linewidth]{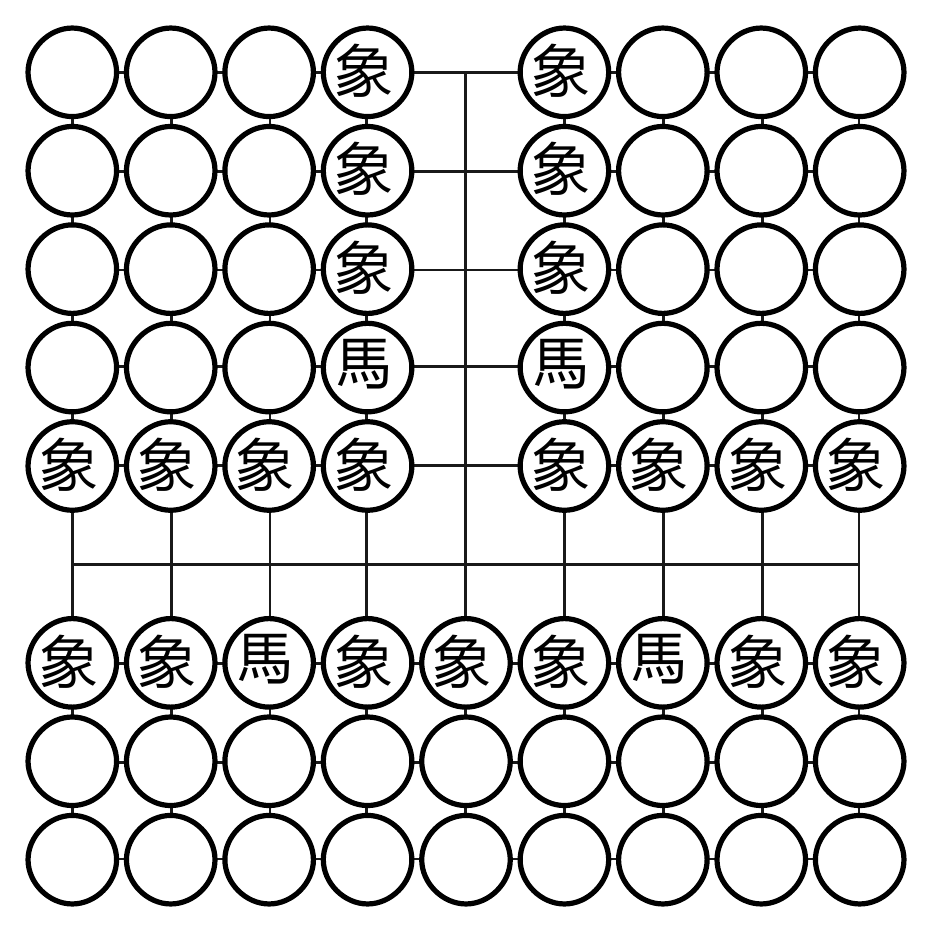}
		\caption{Switch and merge gadgets of Xiangqi.}
		\label{Xswitch}
	\end{minipage}
\end{figure}

\textbf{Finish gadget.} 
Figure \ref{Xfinish} illustrates a finish gadget of Xiangqi. 
Once main RC enters a finish gadget from south, the BG will be checkmated immediately. 
We assume that the BG in the finish gadget is at the corner of ``palace''.
Notice, in a finish gadget, the BG is moveable. 
Since ``stalemate" is a loss for the player with no legal moves in Xiangqi.

\textbf{Turn gadget.} 
Figure \ref{Xturn} illustrates a turn gadget of Xiangqi.  
Main RC can enter a turn gadget from north, and it can move east to leave. 

\textbf{Switch and merge gadgets.} 
Figure \ref{Xswitch} illustrates a switch gadget of Xiangqi. 
When main RC enters a switch gadget from north, it can move east or west to leave. 
Moreover, a merge gadget is identical to a switch gadget in Xiangqi.

\textbf{One-way gadget.} 
Figure \ref{Xoneway1} illustrates an one-way gadget of Xiangqi. 
Main RC can only traverse the one-way gadget from west to east.
If main RC attempts to traverse the gadget from east to west, the BR at s12 can move to s5 to obstruct main RC.

Here, we consider some nontrivial improper moves in an one-way door gadget:

(1) When main RC moves to e5, if the BR stops at f5 (or g5, h5,  ... , r5, t5, u5, v5) in order to obstruct main RC.
Main RC can capture the BH at c5, then the BR can not escape.

(2) When main RC moves to e5, if the BR stops at s5 in order to obstruct main RC.
Main RC can move to i5 to threaten the BH at i12. 
The BR should move back to s12 to protect the BH, otherwise main RC can capture the BH, and it can move to n12 safely, then it can capture three BH's (at n8, n9 and n13) continuously.
The variation diagram of the one-way gadget is illustrated in Figure \ref{Xoneway2}.
In this variation, two RR's at n15 and n16 become moveable, and they can capture the BR.
Furthermore, the RC at n14 becomes new main RC.

(3) If the RC at n14 captures the BE at n10, it will be captured by one of the BR's immediately.
Then a large number of BR's become moveable, thus two RR's (at n15 and n16) will be captured sooner or later.
So that main RC can not traverse the gadget any more.

(4) If the RR at n15 captures one of two BE's (at m15 and o15), it will be captured by one of two BC's (at k15 and q15) immediately.
It is no benefit to Red since red pieces are surrounded by powerful black pieces.

(5) If the RR at n16 captures one of BE's (at m16, n17 and o16), it will be captured by one of BR's so that a large number of BR's become moveable.

\begin{figure}[htbp]
	\centering  
	\includegraphics[width=0.87 \linewidth]{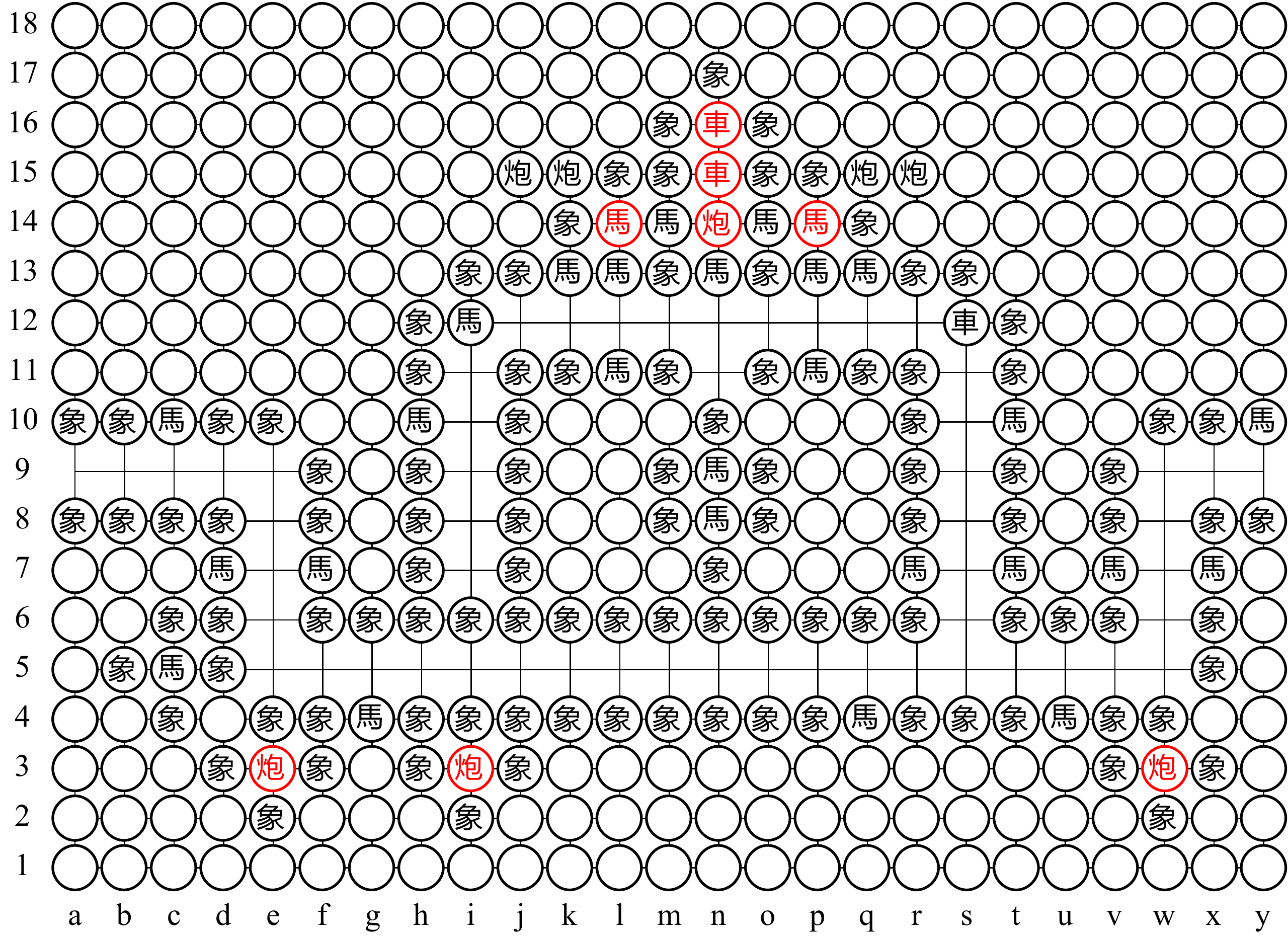}  
	\caption{One-way gadget of Xiangqi.}  
	\label{Xoneway1}   
\end{figure}

\begin{figure}[htbp]
	\centering  
	\includegraphics[width=0.87 \linewidth]{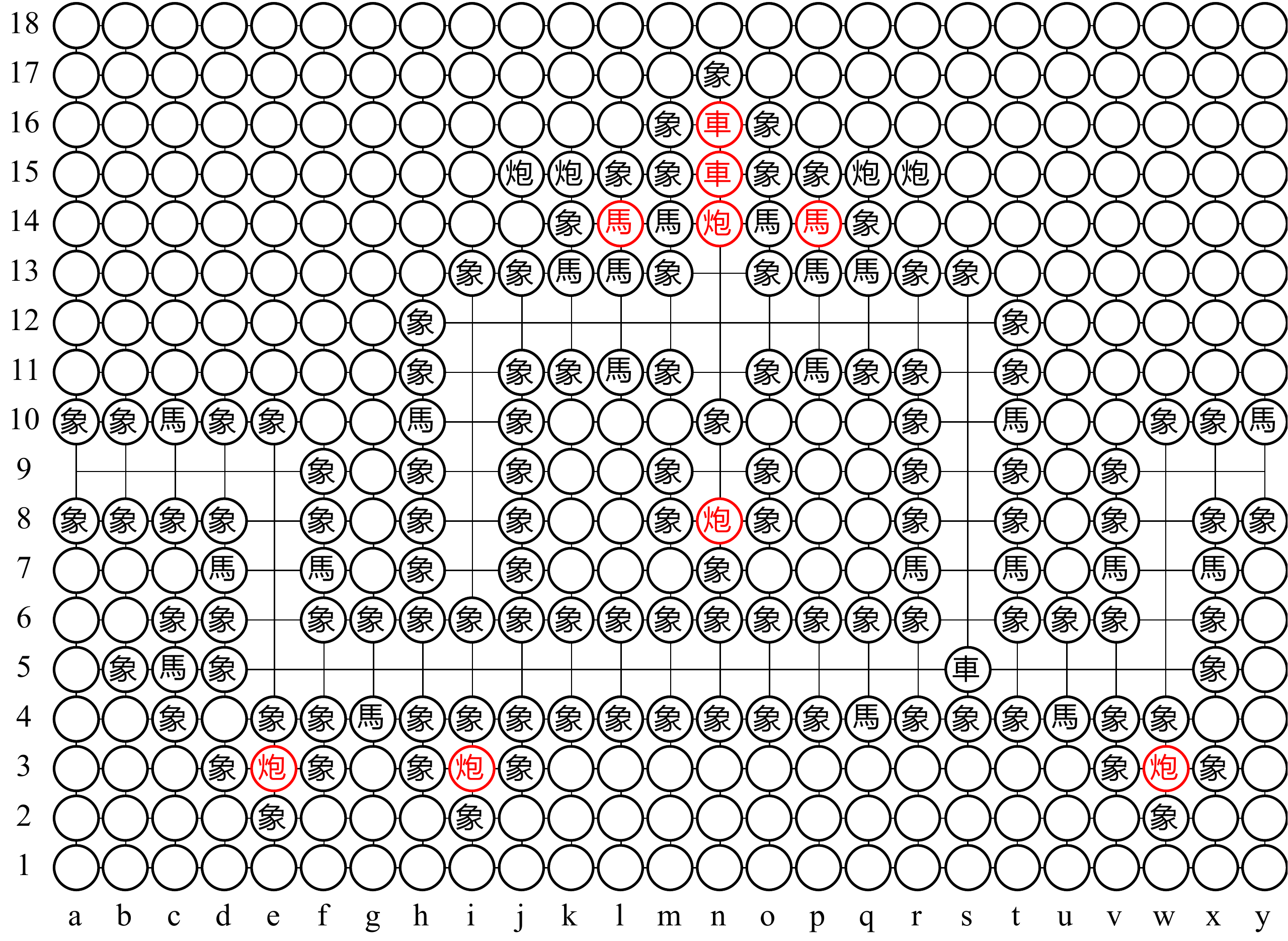}  
	\caption{Variation diagram of the one-way gadget.}  
	\label{Xoneway2}   
\end{figure}

\newpage

\textbf{Crossover gadget.} 
Figure \ref{Xcrossover1} illustrates a crossover gadget of Xiangqi. 
Main RC can traverse a crossover gadget through two paths, and there is no leakage between two paths. 
For one path, main RC enters the gadget from west, and it stops at d14, then the RC at r5 can move to t11 to force the BR to leave l11 so that main RC can traverse the gadget.
If the BR stops at d12 or d13, main RC can capture the BH at d16, so that the BR can not escape.
Similarly, if the BR stops at r12 (or r13, r14, ... , r17), main RC can move to r11, and it can capture the BH at r9.
Situation of another path is similar. 
Moreover, four RC's (at e7, f14, i16 and p18) prevent the BR from leaving the gadget.
The BR should move back to l11 when the RC moves back to r5, which prevents leakage between two paths.
The difference on number of moves of the RC and BR avoids repeating moves.

\begin{figure}[htbp]
	\centering  
	\includegraphics[width=0.83 \linewidth]{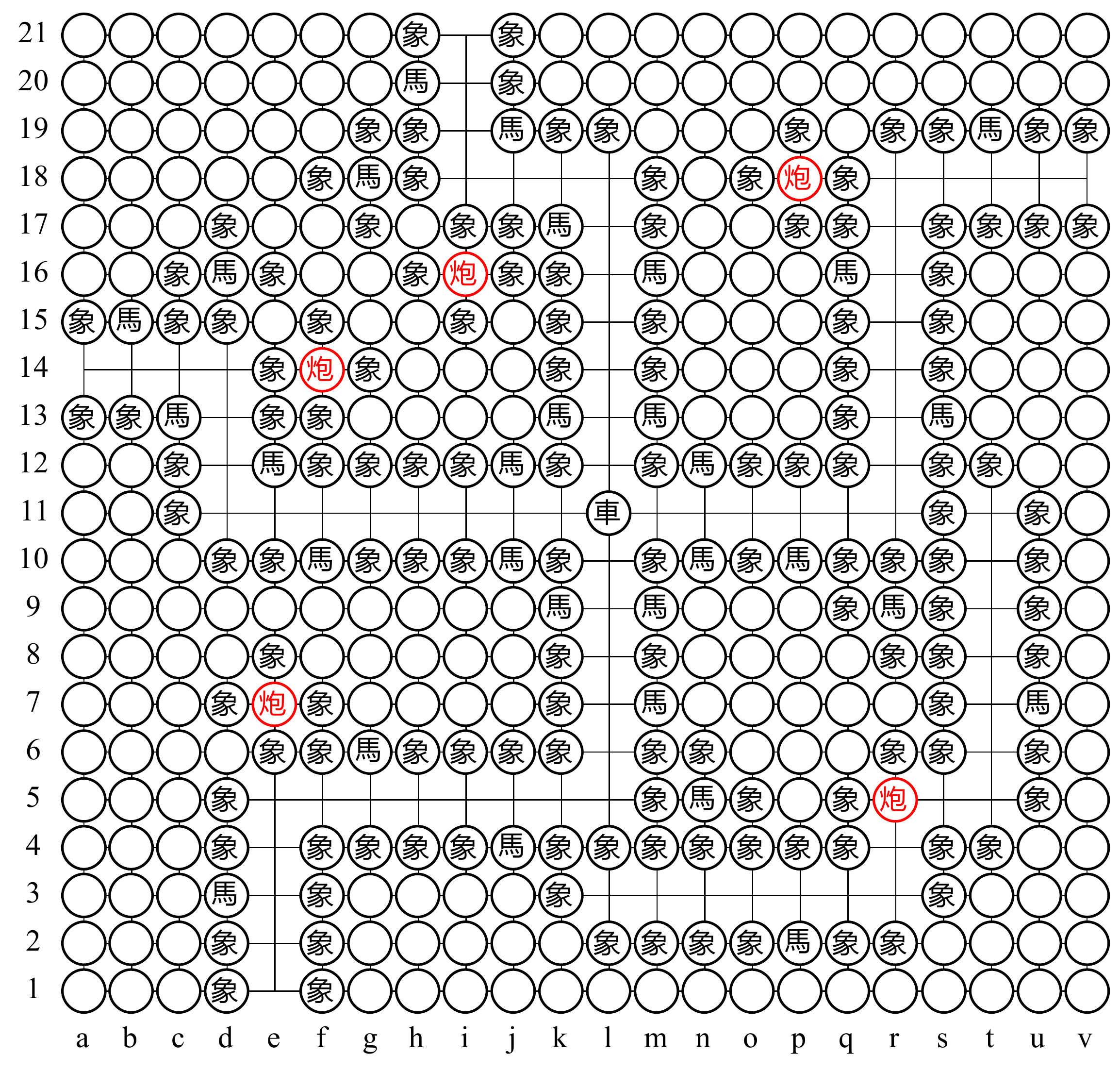}  
	\caption{Crossover gadget of Xiangqi.}  
	\label{Xcrossover1}   
\end{figure}

\textbf{Door gadget.} 
Figure \ref{Xdoor} illustrates a door gadget of Xiangqi. 
Points d8 and a5 are the entrance and the exit of the open path respectively. 
Path of point sequence (d8, d5, h5, c5, a5) is an open path.
Points k1 and n4 are the entrance and the exit of the traverse path respectively. 
Path of point sequence (k1, k4, n4) is a traverse path. 
In Figure \ref{Xdoor}, the gadget is in the closed state. 
Since the BC at i4 protects point k4, main RC can not traverse the traverse path. 
When main RC enters a door gadget from north, it can capture two BH's (at h5 and c5). 
Then the RH at h6 can capture the BC at i4, which makes the door gadget in the open state.

\begin{figure}[htbp]
	\centering  
	\includegraphics[width=0.55 \linewidth]{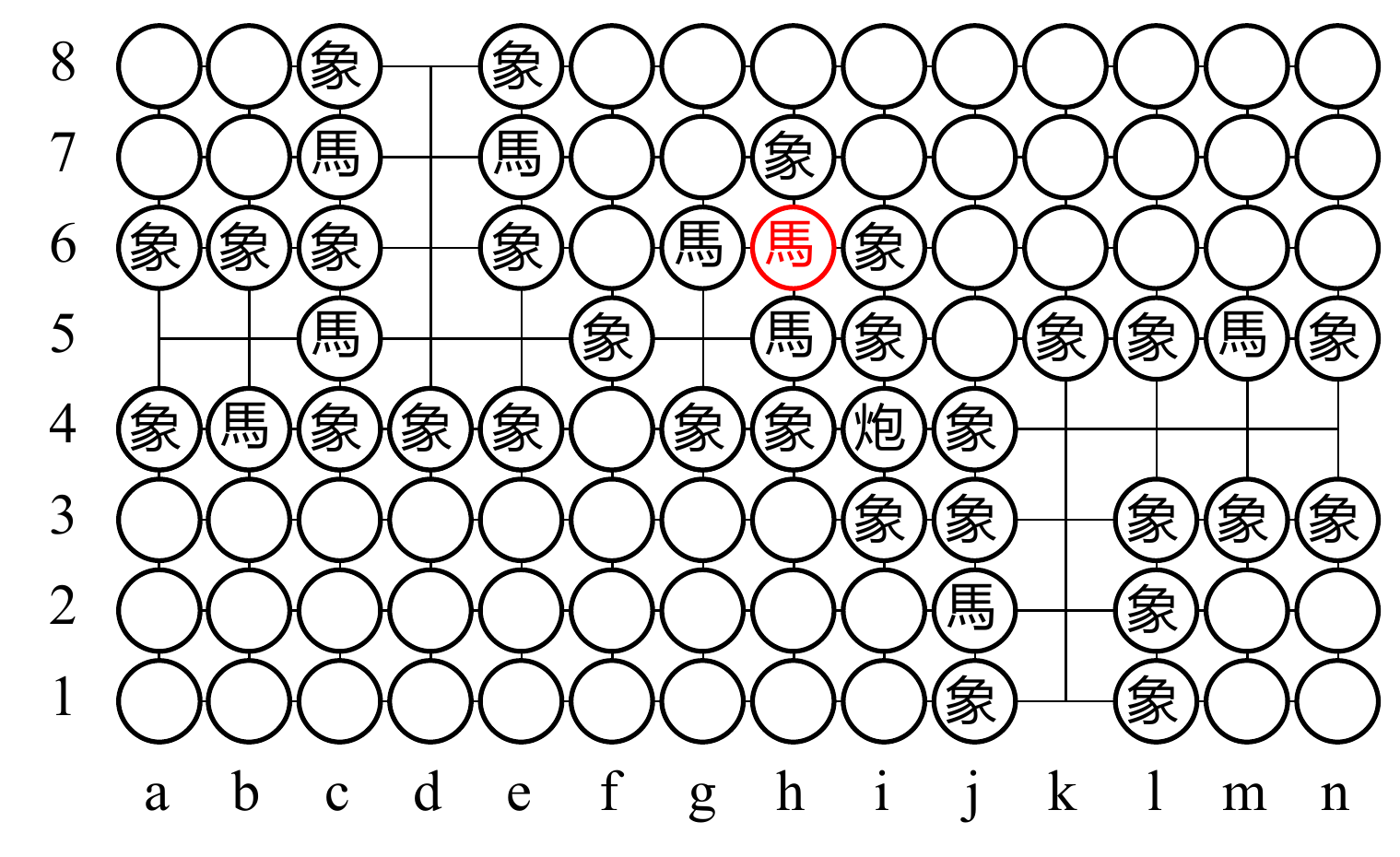}  
	\caption{Door gadget of Xiangqi.}  
	\label{Xdoor}   
\end{figure}

As all gadgets of NP-hardness framework have been constructed in Xiangqi, we obtain the following result.

\begin{proposition}
	Xiangqi is NP-hard.
\end{proposition}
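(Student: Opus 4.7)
The plan is to invoke the NP-hardness framework from Section 3.1 and appeal to the gadget constructions already exhibited earlier in this section. Since 3-SAT is NP-complete and the framework gives a polynomial-time reduction from 3-SAT to the avatar-reachability decision problem whenever one can build the start, finish, turn, switch/merge, one-way, crossover, and door gadgets in the target game, it suffices to verify that each of the Xiangqi constructions in Figures \ref{Xstart}--\ref{Xdoor} faithfully realises the intended gadget semantics.

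First I would package the constructions as a single statement: given a 3-CNF formula $\varphi$, assemble the framework of Figure \ref{NPHframework} using the Xiangqi gadgets, with unmoveable rooks surrounding every elephant/horse pathway to guarantee that main RC cannot shortcut between gadgets. I would emphasise that each gadget is locally self-contained: the protecting rook walls mean no piece outside a gadget can ever move or be captured in response to activity inside, so the correctness of the global reduction reduces to the correctness of each gadget in isolation, which has already been argued case by case (including the nontrivial improper-move analyses for the one-way gadget in cases (1)--(5) and for the crossover gadget via the repeated-move parity argument).

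Next I would observe that the size of the assembled position is polynomial in $|\varphi|$: each variable contributes a constant-size switch plus constant-size door gadgets per literal occurrence, each clause contributes a constant-size switch/merge/door arrangement, and the one-way and crossover gadgets needed to route the paths in the plane introduce only a polynomial number of additional constant-size gadgets. Hence the reduction runs in polynomial time, and by the correctness argument for the framework (Section 3.1) Red has a forced win from the assembled position if and only if $\varphi$ is satisfiable.

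The main obstacle, and the only thing not already in the bag, will be confirming that the door gadget behaves correctly under the adversarial play of Black. Concretely, I would need to check that when main RC traverses the open path and captures the BH's at h5 and c5, Black has no way to prevent the subsequent RH--to--i4 capture (so the gadget indeed opens), while if main RC attempts to traverse the traverse path before opening the gadget, the BC at i4 unavoidably captures it. A short adversarial case analysis, analogous to the one carried out for the one-way gadget, closes this gap. Once this verification is done, combining it with the reusable gadgets above completes the construction of the entire NP-hardness framework inside Xiangqi, from which the proposition follows.
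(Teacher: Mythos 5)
Your proposal is correct and follows essentially the same route as the paper: the paper's proof of this proposition is precisely the observation that all gadgets of the Section~3.1 framework (start, finish, turn, switch/merge, one-way, crossover, door) have been realised in Xiangqi, so the polynomial-time reduction from 3-SAT goes through. The door-gadget verification you flag as the remaining obstacle is already handled in the paper's gadget description (main RC captures the BH's at h5 and c5 along the open path, after which the RH at h6 captures the BC at i4, opening the traverse path), so nothing further is needed.
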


\subsection{PSPACE-hardness of Xiangqi}

To prove PSPACE-hardness of Xiangqi, we need to construct all gadgets of PSPACE-hardness framework in Xiangqi.
Fortunately, the gadgets constructed in last section could be reused here.
So we just need to construct the open-close door gadget.

\textbf{Open-close (RRR) door gadget.} 
Figure \ref{XRRRdoor} illustrates an open-close door gadget of Xiangqi. 
In this gadget, only nine RC's (at q18, i3, i13, i27, i33, q3, q9, q23 and q33), one BC (at m18) and two BH's (at i17 and q19) can move. 
We call the RC at q18 and the BC at m18 control cannons.
When control RC stops at q18, the gadget is considered in the closed state; when control RC stops at i18, the gadget is considered in the open state.

\begin{figure}[htbp]
	\centering  
	\includegraphics[width=1.0 \linewidth]{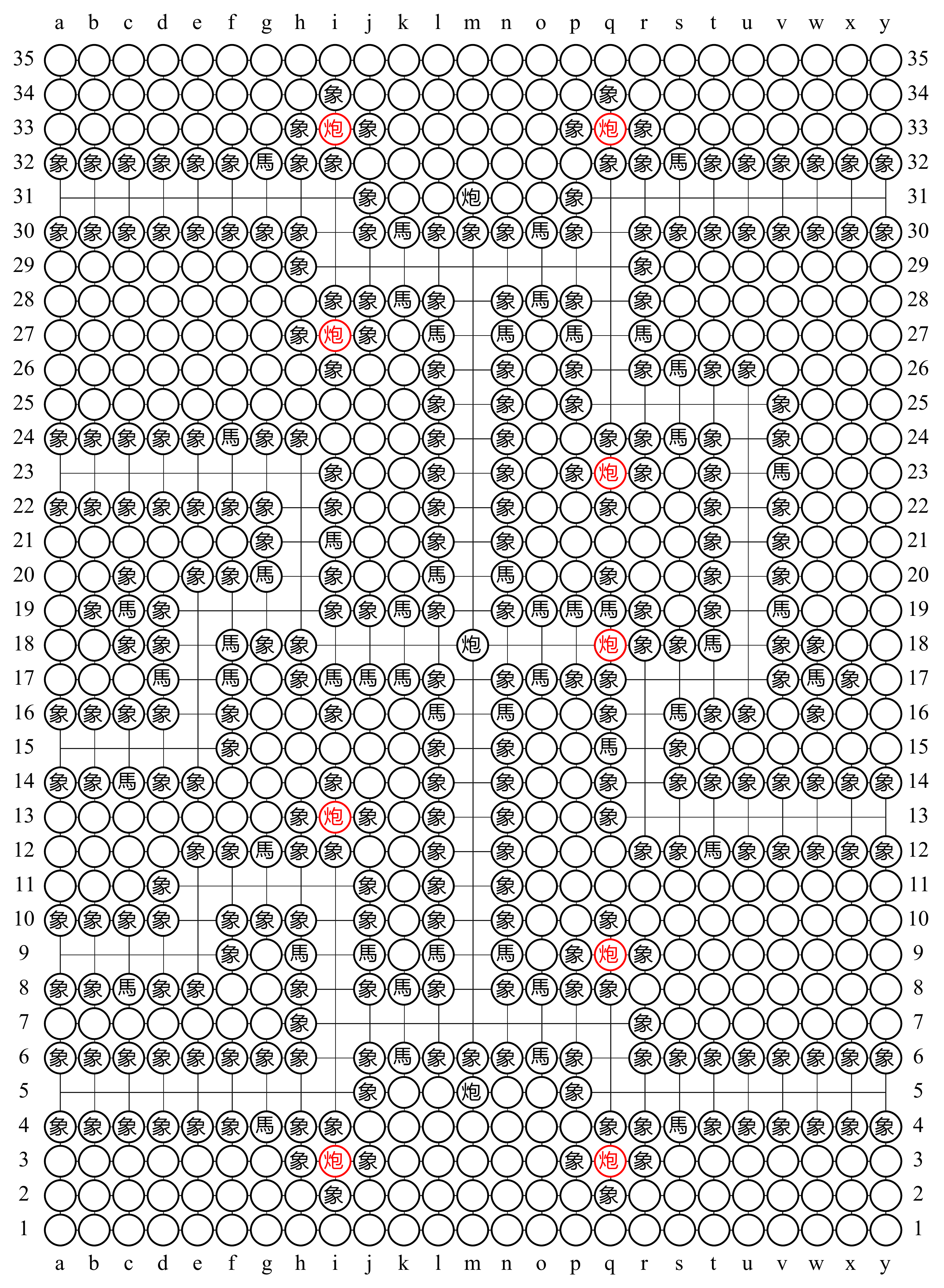}  
	\caption{Open-close (RRR) door gadget of Xiangqi.}  
	\label{XRRRdoor}   
\end{figure}

Points a5 and a9 are the entrance and the exit of the open path respectively. 
Path of point sequence (a5, i5, i7, i11, e11, e9, a9) is an open path. 
Points a15 and a23 are the entrance and the exit of the traverse path respectively. 
Path of point sequence (a15, e15, e19, h19, h23, a23) is a traverse path. 
Points y31 and y13 are the entrance and the exit of the close path respectively. 
Path of point sequence (y31, q31, q29, q25, u25, u17, r17, r13, y13) is a close path.
Points a31 and y5 are entrances of rapid checkmate paths.

Main RC can open the gadget by traversing the open path. 
When main RC moves to i7, control BC has to move to m7 to obstruct main RC. 
Otherwise, main RC can enter the rapid checkmate path. 
After control BC leaves m18, control RC can move to i18 to obstruct the BH at i17, which makes the gadget in the open state.
Then, when main RC moves to e11, control BC should move back to m18 to restrict movements of control RC.

Main RC can traverse the traverse path if and only if the gadget is in the open state.
If control RC stops at i18, main RC can traverse the path safely. 
If the gadget is in the closed state, main RC will be captured by the BH at i17 when it traverses the path.

When main RC traverses the close path, it has to close the gadget.
To traverse the path, main RC must pass point r17, however, the point is protected by the BE at q19. 
Thus, after main RC moves to q29, and it forces control BC to move to m29, control RC must move to q18 to obstruct the BH, which makes the gadget in the closed state.
Then, when main RC moves to u25, control BC should move back to m18 to restrict movements of control RC.

Here, we consider some nontrivial improper moves in an open-close gadget:

(1) If control RC attempts to leave the gadget, it must pass point m18. 
However, two BC's (at m5 and m31) protect point m18. 

(2) If control BC attempts to leave the gadget, it will be captured by one of eight RC's (at i3, i13, i27, i33, q3, q9, q23 and q33).

(3) If eight RC's move without capturing control BC.
Then a large number of BR's will become moveable, thus these moves do not benefit Red.

(4) If the BH at i17 moves to h19 without capturing main RC.
Main RC can capture the BH at c19 when it traverses the traverse path, and the BH at h19 can not escape.
If the BH at q19 moves to r17 without capturing main RC, main RC can capture the BH at w17 similarly.

(5) When main RC traverses the open path, and it moves to e11, control BC may move to m18. 
Then, if main RC moves back to i7, which forces control BC to move to m7 again. 
However, it is no benefit to Red, because Red requires two moves while Black just requires one move. 
The difference on number of moves also avoids repeating moves.

(6) When main RC traverses close path, if it moves back to q29. 
The situation is similar to case (5).

(7) If main RC moves into m column, it will be captured by one of the BC's (at m5 and m31).

(8) If control BC does not move back to m18 when main RC leaves the gadget.
Movements of control RC will not be restricted, then Red can open the gadget at any time.
Thus control BC should move back to m18 as far as possible.

As all gadgets of PSPACE-hardness framework have been constructed in Xiangqi, we obtain the following result.

\begin{proposition}
	Xiangqi is PSPACE-hard.
\end{proposition}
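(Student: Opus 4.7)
The plan is to invoke the PSPACE-hardness framework of Section 3.2, which reduces from TQBF. By that framework it suffices to realize, inside Xiangqi, the start, finish, turn, switch, merge, one-way, crossover, and open-close door gadgets, wire them according to a given quantified Boolean formula, and check that the reduction runs in polynomial time while preserving the winner.

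First, I would observe that the start, finish, turn, switch, merge, one-way, and crossover gadgets have already been built for the NP-hardness result in Section 5.1. Each of these is a local, self-contained Xiangqi position in which only a small set of pieces can move, with every other piece pinned in place by overwhelming unmoveable rook coverage. Because of this, the gadgets compose freely: I can lay out many copies on a sufficiently large board, separated by walls of unmoveable rooks, without introducing unintended interactions between distinct copies. Hence for PSPACE-hardness I only need to supply an open-close door gadget compatible with these existing components, and exactly such a gadget has just been presented in Figure \ref{XRRRdoor}.

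Second, I would verify the three defining properties of the open-close door gadget: (i) the traverse path is crossable by the main RC if and only if the gadget is in the open state; (ii) traversing the open path forces the gadget into the open state; and (iii) traversing the close path forces it into the closed state; and, additionally, that there is no leakage between the three paths. Properties (i)--(iii) follow from the interplay of the control RC at q18, the control BC at m18, the two black horses at i17 and q19, and the surrounding pinning rooks, together with the rapid checkmate threats entering at a31 and y5 that force Black to cooperate when Red traverses either the open or the close path.

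The main obstacle is ruling out deviating strategies inside the open-close gadget, since a two-player reduction must be robust against adversarial play. I would have to argue, for every improper move available to Red or Black, that it is strictly worse than the intended behaviour --- this is precisely the content of cases (1)--(8) listed above, where the move-count asymmetry between the Red and Black responses (Red needs two moves to re-threaten while Black needs only one) is what prevents infinite repetition loops that could otherwise let Black reset the gadget for free. Once these local verifications are in place, assembling the gadgets into the circuit of Figure \ref{PSPACEHframework} is mechanical, the total piece count is polynomial in the size of the TQBF instance, and the reduction therefore yields PSPACE-hardness of Xiangqi.
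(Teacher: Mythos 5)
Your proposal follows essentially the same route as the paper: reuse the start, finish, turn, switch, merge, one-way, and crossover gadgets already built for the NP-hardness reduction, add the open-close (RRR) door gadget of Figure \ref{XRRRdoor}, verify its open/traverse/close semantics against improper moves (including the move-count asymmetry that prevents repetition), and assemble the TQBF circuit in polynomial size. This matches the paper's argument, so no further comparison is needed.
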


\subsection{EXPTIME-hardness of Xiangqi}

To prove EXPTIME-hardness of Xiangqi, we need to construct all gadgets of EXPTIME-hardness framework in Xiangqi.
The gadgets constructed in last two sections could be reused, and the start, finish, turn, switch and merge gadgets for Black could be constructed symmetrically.
We still need to construct the one-way gadget for Black and the crossover gadget for Red and Black.
Moreover, the RRR door gadget is identical to the open-close gadget in last section, and BBB door, BRB door and RRB door gadgets could be constructed symmetrically, so we also need to construct RBR door and BBR door gadgets.

Since paths are composed of red elephants and black elephants in our reduction, the border between unmoveable red rooks and black rooks must be taken into consideration, and it is illustrated in Figure \ref{Xborder}.
Notice, the elephants can not cross the river in Xiangqi.

\begin{figure}[htbp]
	\centering  
	\includegraphics[width=0.52 \linewidth]{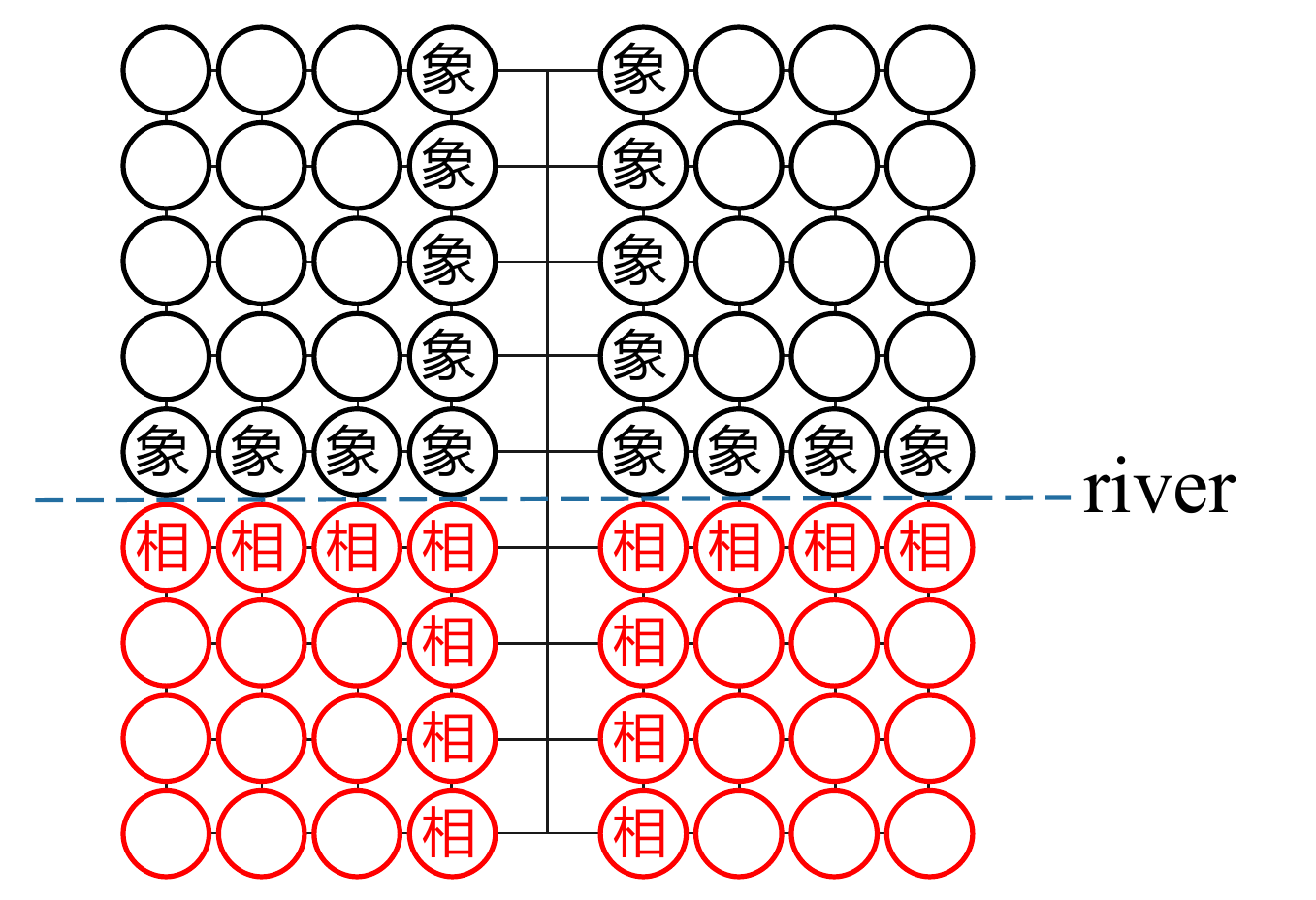}  
	\caption{Border between Red and Black in Xiangqi.}  
	\label{Xborder}   
\end{figure}

\begin{figure}[htbp]
	\centering  
	\includegraphics[width=0.75 \linewidth]{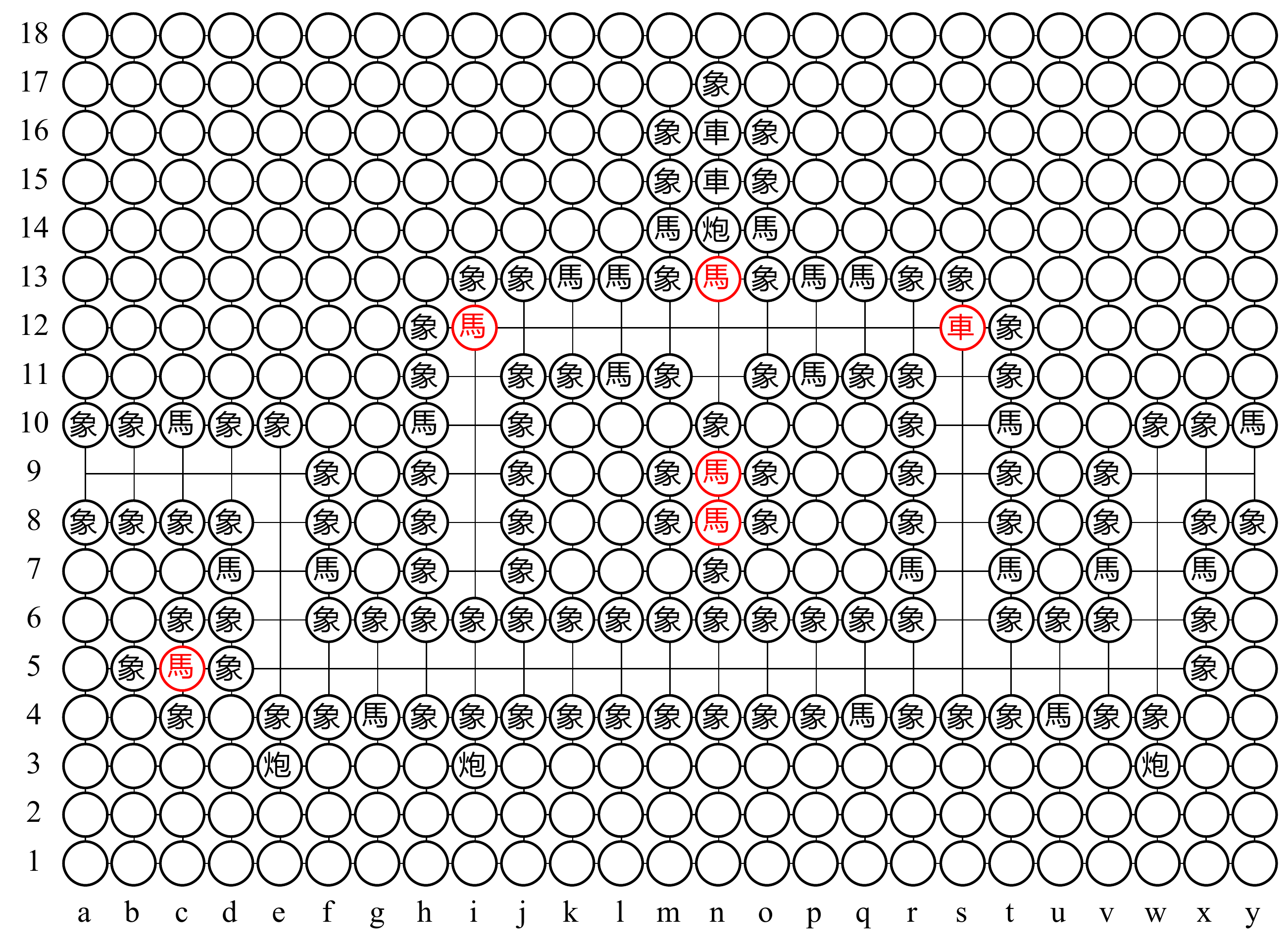}  
	\caption{One-way gadget for Black of Xiangqi.}  
	\label{Xoneway3}   
\end{figure}

\textbf{One-way gadget for Black.} 
Figure \ref{Xoneway3} illustrates an one-way gadget for Black of Xiangqi. 
This gadget is similar to the one-way gadget for Red (Figure \ref{Xoneway1}).
Main BC can only traverse the one-way gadget from west to east.
If main BC attempts to traverse the gadget from east to west, the RR at s12 can move to s5 to obstruct main BC.
If the RR stops at s5 to obstruct main BC when main BC moves from west to east, main BC can move to i5 to force the RR to move back to s12.
If the RR stops at s5, main BC can capture the RH at i12, and it can move to n12 so that three RH's (at n8, n9 and n13) will be captured.
Then one BC (at n14) and two BR's (at n15 and n16) become moveable.

\textbf{Crossover gadget for Red and Black.} 
Figure \ref{Xcrossover2} illustrates an crossover gadget (for Red and Black) of Xiangqi. 
Main RC can traverse the gadget from west to east, while main RC can traverse the gadget from north to south.
Two RC's (at c11 and o11) protect points c13 and o13, and two BC's (at l7 and l19) protect points j7 and j19, thus there is no leakage between two paths.

We consider a type of nontrivial improper moves in this gadget.
If main BC stops at d13 (or e13, f13, ... , n13) in order to obstruct main RC.
When main RC enters the gadget, and it moves to c13, main RC can ``jump'' over main BC, and it can capture two BH's (at s13 and s10) continuously.
Then the RC at s14 and the RR at s15 become moveable.
Moreover, the RC at s14 becomes new main RC, and the RR prevents main BC from traversing the gadget henceforth.
The situation that main RC stops at j8 (or j9, j10, ... , j18) is similar.

\begin{figure}[htbp]
	\centering  
	\includegraphics[width=0.8 \linewidth]{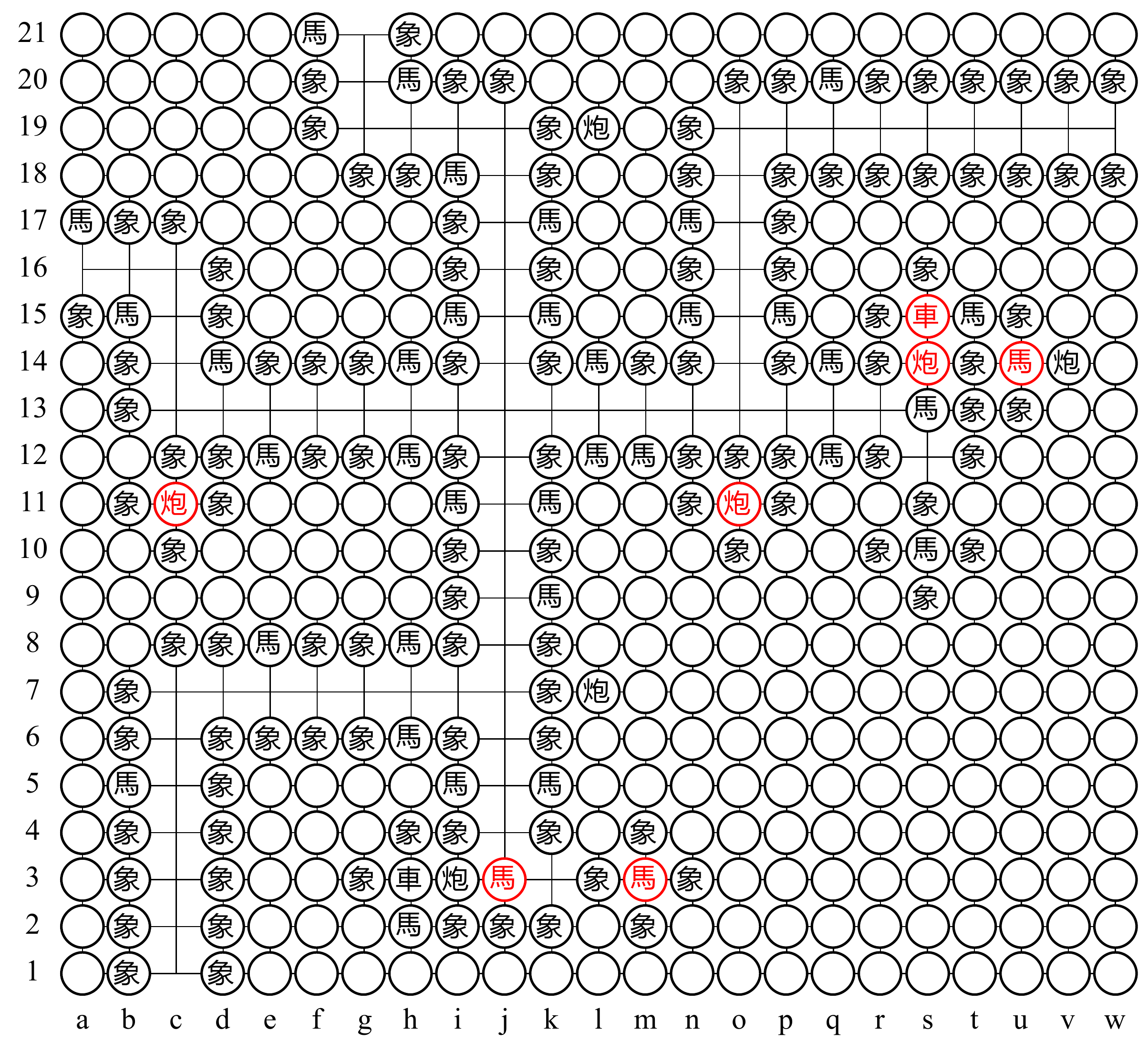}  
	\caption{Crossover gadget for Red and Black of Xiangqi.}  
	\label{Xcrossover2}   
\end{figure}

\textbf{RBR door gadget.} 
Figure \ref{XRBRdoor} illustrates a RBR door gadget of Xiangqi.  
In this gadget, only eleven RC's (at e18, c18, c20, i3, i13, i27, i33, q3, q9, q23 and q33), one BC (at m18) and one BH (at q19) can move. 
We call the moveable RC (at e18) and BC control cannons.
When control RC stops at e18, the gadget is considered in the closed state; when control RC stops at q18, the gadget is considered in the open state.

Points y31 and y13 are the entrance and the exit of the open path respectively. 
Path of point sequence (y31, q31, q29, q25, u25, u17, r17, r13, y13) is an open path for Red. 
Points a14 and a24 are the entrance and the exit of the traverse path respectively.
Path of point sequence (a14, e14, e24, a24) is a traverse path for Black. 
Points a5 and a9 are the entrance and the exit of the close path respectively.
Path of point sequence (a5, i5, i7, i11, e11, e9, a9) is a close path for Red.
Points a31 and y5 are entrances of rapid checkmate paths for Red.

When main RC traverses the open path, it has to open the gadget.
To traverse the path, main RC must pass point r17, however, the point is protected by the BH at q19. 
Thus, after main RC moves to q29, and it forces control BC to move to m29, control RC must move to q18 to obstruct the BH, which makes the gadget in the open state.
Then, when main RC moves to u25, control BC should move back to m18 to restrict movements of control RC.

Main BC can traverse the traverse path if and only if the gadget is in the open state.
If control RC stops at q18, main BC can traverse the path successfully. 
If the gadget is in the closed state, main BC can not traverse the path since it is obstructed by control RC.

Main RC can close the gadget by traversing the close path. 
When main RC moves to i7, control BC has to move to m7 to obstruct main RC. 
Otherwise, main RC can enter the rapid checkmate path. 
After control BC leaves m18, control RC can move to e18 to block the traverse path, which makes the gadget in the closed state.
Then, when main RC moves to e11, control BC should move back to m18 to restrict movements of control RC.

\begin{figure}[htbp]
	\centering  
	\includegraphics[width=1.0 \linewidth]{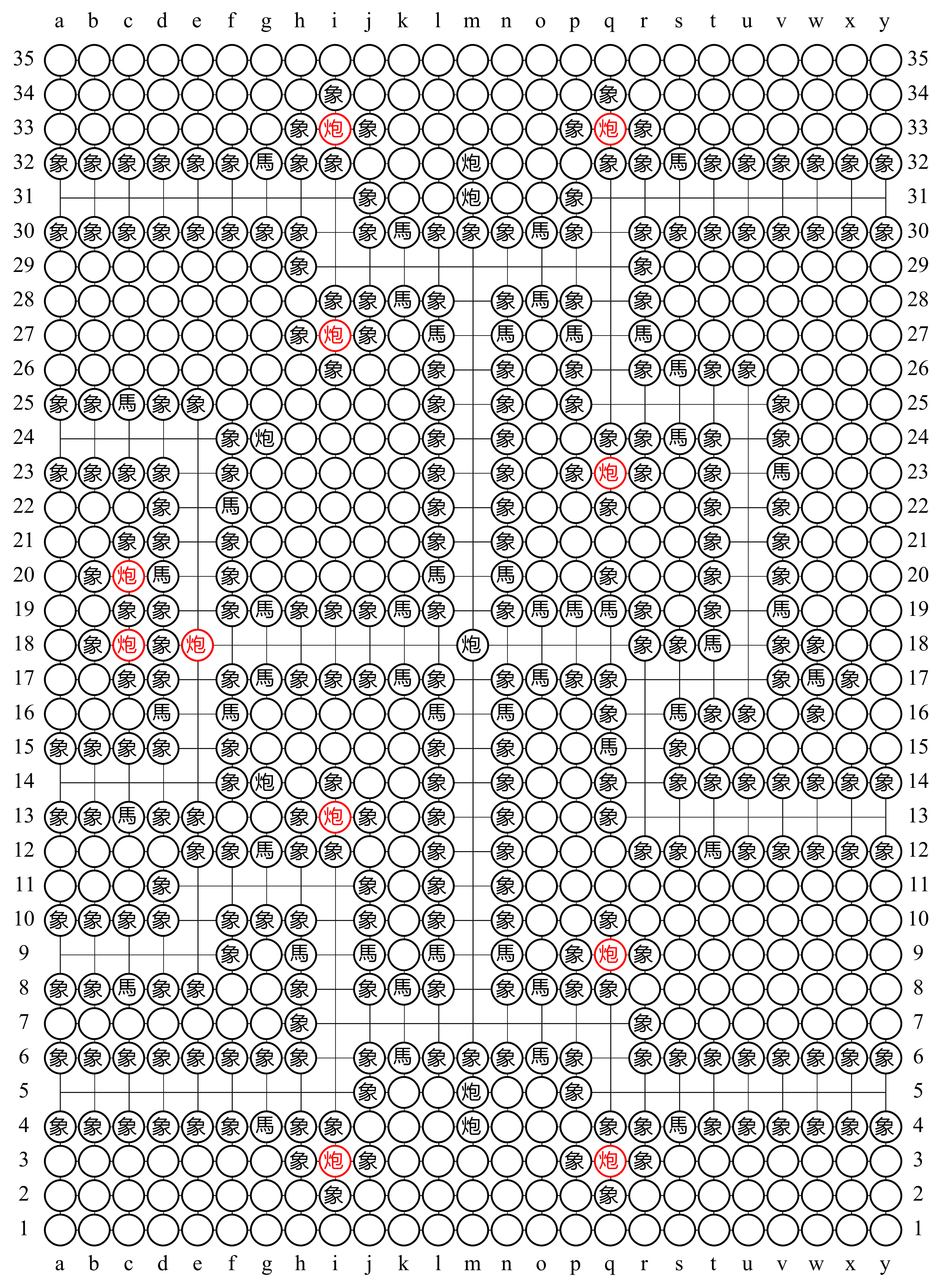}  
	\caption{RBR door gadget of Xiangqi.}  
	\label{XRBRdoor}   
\end{figure}

\begin{figure}[htbp]
	\centering  
	\includegraphics[width=1.0 \linewidth]{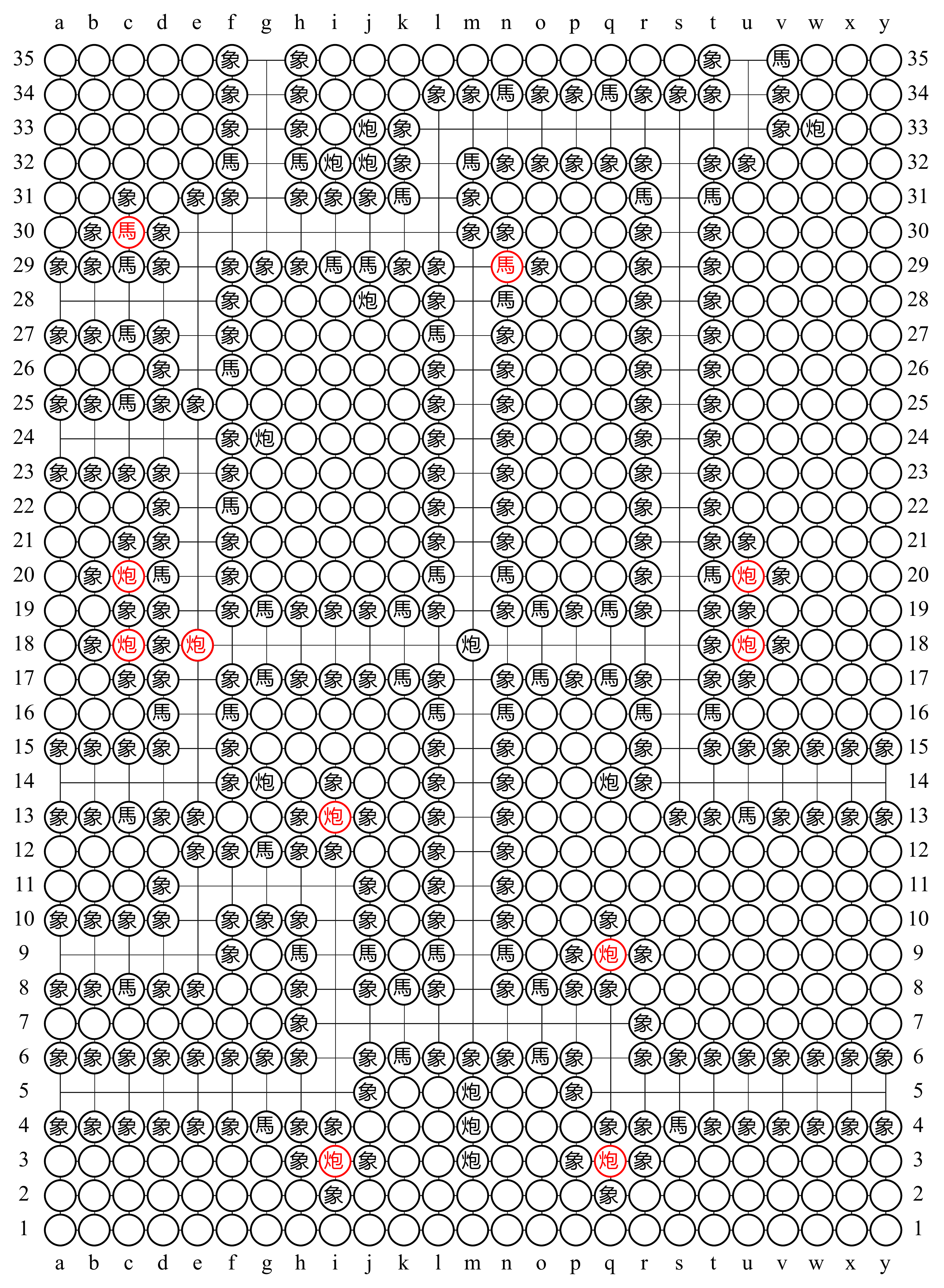}  
	\caption{BBR door gadget of Xiangqi.}  
	\label{XBBRdoor}   
\end{figure}

Here, we consider some different nontrivial improper moves in a RBR door gadget:

(1) Control RC can not leave the gadget.
Since two BC's (at g14 and g24) protect points e14 and e24.
Moreover, four BC's (at m4, m5, m31 and m32) prevent control RC from moving to m18 even if the RC at c18 can capture one of the BC's.

(2) When the gadget is in the open state, if control BC move to e18 in order to leave the gadget, it will be captured by the RC at c18.
If control BC captures the RC at c18, it will be captured by the RC at c20.
Then Red can close the gadget at any time.

(3) When the gadget is in the open state, if the RC at c18 captures control BC at m18, it will be captured by one of the BC's (at m5 and m31). 

(4) When the gadget is in the open state, if main BC moves to e18, it will be captured by the RC at c18.

\textbf{BBR door gadget.} 
Figure \ref{XBBRdoor} illustrates a BBR door gadget of Janggi. 
In this gadget, only nine RC's (at e18, c18, c20, i3, i13, q3, q9, u18 and u20), one RH (at n29) and one BC (at m18) can move. 
We call the moveable RC (at e18) and BC control cannons.
When control RC stops at e18, the gadget is considered in the closed state; when control RC stops at s18, the gadget is considered in the open state.

Points g35 and u35 are the entrance and the exit of the open path respectively. 
Path of point sequence (g35, g30, l30, l33, s33, u33, u35) is an open path for Black. 
Points a14 and a24 are the entrance and the exit of the traverse path respectively. 
Path of point sequence (a14, e14, e24, a24) is a traverse path for Black. 
Points a5 and a9 are the entrance and the exit of the close path respectively. 
Path of point sequence (a5, i5, i7, i11, e11, e9, a9) is a close path for Red.
Point y5 is the entrance of rapid checkmate path for Red, and points a28 and y14 are the entrances of rapid checkmate paths for Black.

Main BC can open the gadget by traversing the open path.
To traverse the path, main BC must pass point l30, however, the point is protected by the RH at n29. 
Thus control BC should move to m29 to obstruct the RH.
When main BC move to l33, control RC has to move to s18.
Otherwise, main BC can enter the rapid checkmate path.
Then control BC should move back to m18 to restrict movements of control RC.

The traverse path in a BBR door gadget is similar to the path in a RBR door gadget.
Main BC can traverse the traverse path if and only if the gadget is in the open state.

The close path in a BBR door gadget is also similar to the path in a RBR door gadget.
Main RC can close the gadget by traversing the close path.

Here, we consider some different nontrivial improper moves in a BBR door gadget:

(1) If control RC moves to s14, it will be captured by the BC at q14.

(2) If control RC moves to s33, it will be captured by one of the BC's (at j33 and w33).

(3) When control BC moves to m29 to obstruct the RH, if control RC moves to m18.
Three BC's (at m3, m4 and m5) prevent the RC's from leaving the gadget even if two RC's (at c18 and ) can capture two of three BC's.

(4) When control BC moves to m7 to obstruct main RC, if control RC moves to m18 to obstruct control BC.
In this case, three BC's can not force control RC to leave m18.
However, when main BC enters the open path, it can move to e30 so that control RC has to move to e18, otherwise, main BC can enter rapid checkmate path through a28.
Once control RC leaves m18, control BC can move back to m18.

(5) If control BC moves to s18, it will be captured by the RC at u18.

(6) If control BC captures the RC at u18, it will be captured by the RC at u20 immediately.

(7) If the RH (at n29) moves to l30 without capturing main BC, main BC can capture the RH at c30 to threaten the RH when it traverses the open path.
Then, there are five choices for the RH at l30.
If the RH stops at l30, it will be captured by main RC.
If the RH moves to j29 (or j31, k32, m32), it will be captured by one of three BC's (at i32, j28 and j32), then the moveable BC becomes new main BC.

(8) When main BC traverses the open path, it moves to l33, and control RC may move to s18. 
Then, if main BC moves back to l30, and it moves to e30 so that control RC has to move to e18 to prevent main BC from entering rapid checkmate path.
However, it is no benefit to Black, because Black requires two moves while Red just requires one move. 
The difference on number of moves also avoids repeating moves.

(9) If main RC stops at i7 for a long time, control BC has to stop at m7.
So that main BC can not traverse the open path.
The situation is similar to case (6) of BBR door gadget of Janggi.

(10) If main BC stops at s33 for a long time, the main RC can not close the BBR door gadget.
The situation is similar to case (7) of BBR door gadget of Janggi.

As all gadgets of EXPTIME-hardness framework have been constructed in Xiangqi, we obtain the following result.

\begin{theorem}
	Xiangqi is EXPTIME-complete.
\end{theorem}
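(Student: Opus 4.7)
The plan is to mirror the two-line argument used for Theorem 1 (Janggi). For the containment step I would argue that any Xiangqi position is specifiable by a polynomial number of bits (piece type and color per square, plus a turn bit), so the set of positions reachable from a given instance has size at most singly exponential in the input; a standard retrograde/minimax search over this finite position graph decides in exponential time whether Red has a forced win, placing Xiangqi in EXPTIME. Nothing in this step is specific to Xiangqi beyond the finiteness of the piece alphabet and the polynomial-space encoding of a position.

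For the hardness step I would invoke the EXPTIME-hardness framework of Section 3.3 and tally the gadgets already built throughout Section 5. Specifically, the start, finish, turn, switch, merge, one-way, Red-Red crossover, and (RRR) open-close door gadgets are supplied by Sections 5.1 and 5.2; the one-way for Black, the Red-Black crossover, and the RBR and BBR door gadgets are supplied by Section 5.3; and the remaining BBB, BRB, RRB door gadgets together with the Black-Black crossover are obtained by reflecting the corresponding Red-side constructions about the river boundary illustrated in Figure \ref{Xborder}. Since the framework requires only polynomially many gadgets and each gadget occupies a constant-size patch of board, the composed reduction from the formula game $G_2$ runs in polynomial time and yields a Xiangqi position in which Red has a forced win if and only if Player I has a forced win in $G_2$.

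The main obstacle I anticipate is defending the symmetric construction for Black, because Xiangqi pieces are not fully color-symmetric: elephants cannot cross the river, generals are palace-bound, and soldiers move forward asymmetrically. I would address this by emphasizing that the Black-side gadgets re-use only elephants, horses, rooks, cannons, and generals in the same local geometric patterns as their Red counterparts, placed entirely on Black's side of the border of Figure \ref{Xborder}; soldiers and advisors are never needed, exactly in parallel with the remark made for Janggi. Once this symmetry is justified, combining the EXPTIME upper bound with the gadget catalogue yields EXPTIME-completeness, completing the theorem.
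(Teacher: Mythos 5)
Your proposal follows essentially the same two-step argument as the paper: membership in EXPTIME via exhaustive search over the exponentially bounded position graph, and hardness by assembling the gadget catalogue of Sections 5.1--5.3 into the $G_2$ reduction of Section 3.3, with the Black-side gadgets obtained by reflection across the border of Figure \ref{Xborder}. Your added discussion of why the color asymmetries of Xiangqi do not obstruct the symmetric constructions is a reasonable elaboration of what the paper leaves implicit, but it does not change the route.
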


\begin{proof}
	Firstly, Xiangqi could be solved by a brute search algorithm in exponential time, thus Xiangqi is in EXPTIME. 
	Secondly, we have constructed all gadgets of EXPTIME-hardness framework in Xiangqi, thus Xiangqi is EXPTIME-hard.
\end{proof}

\textbf{Remarks.}
In our reduction, the game board is almost full of pieces.
If we assume that the whole instance of Xiangqi is surrounded by unmoveable elephants, advisors and soldiers, number of pieces could be reduced.
We do not discuss the reachability of the instance of Xiangqi here.

\newpage

\section{Conclusion}

We review NP-hardness framework and PSPACE-hardness framework for 2D platform games. 
By using similar technique, we introduce a EXPTIME-hardness framework for proving hardness of games. 
With these frameworks in hand, we can analyse computational complexity of games conveniently. 
We discuss complexity of Xiangqi and Janggi, and we prove that Xiangqi and Janggi are both EXPTIME-complete.

Establishing more hardness frameworks is one of future research directions.
Robson \cite{expspacegame} described several EXPSPACE-complete formula games which are ``no-repeat" version of games described in \cite{difficultgames}.
A further rule modification results in a formula game which is 2EXPTIME-complete.
However, it is difficult to simulate the ``no-repeat" rule in the framework with polynomial gadgets.

We will also try to use the hardness frameworks to analyse complexity of other interesting games in future.


\bibliographystyle{plain}
\bibliography{ref}

\end{document}